\documentclass[letterpaper, 10 pt]{ieeeconf}  
 
\IEEEoverridecommandlockouts 
\usepackage{geometry}
\newgeometry{top=0.75in, bottom=0.75in, right=0.75in, left=0.75in}

\newcommand{\zono}[1]{\langle #1 \rangle}
\usepackage{packagearticleieee}
\usepackage{amssymb}

\newtheorem{theorem}{Theorem}
\newtheorem{proposition}{Proposition}

\usepackage{cite}
\usepackage{booktabs}
\usepackage{multirow}

 \usepackage{algorithm} 
 \usepackage{algpseudocode}
 \usepackage[utf8]{inputenc}
\usepackage{graphicx}
\usepackage{subcaption}

\makeatletter
\let\NAT@parse\undefined
\makeatother
\usepackage{url}
\usepackage[colorlinks=true,linkcolor=blue!80!black,citecolor=blue!80!black,urlcolor=blue!80!black]{hyperref}
\DeclareMathSymbol{\shortminus}{\mathbin}{AMSa}{"39}

\def\tr{\text{tr}}

\title{\LARGE\bf Transformer-Enhanced Data-Driven Output Reachability \\ with Conformal Coverage Guarantees}

\author{Zhen Zhang, Peng Xie, Wenyuan Wu, Yanliang Huang, and Amr Alanwar
\thanks{All authors are with the School of Computation, Information and Technology, Technical University of Munich, Germany. (Email: $\{$zhenzhang.zhang, p.xie, yanliang.huang,  wenyuan.wu, alanwar$\}$@tum.de)}
}

\begin{document}
\maketitle
\thispagestyle{empty}
\pagestyle{empty}

\begin{abstract}
This paper considers output reachability analysis for linear time-invariant systems with unknown state-space matrices and unknown observation map, given only noisy input-output measurements. The Cayley--Hamilton theorem is applied to eliminate the latent state algebraically, producing an autoregressive input-output model whose parameter uncertainty is enclosed in a matrix zonotope. Set-valued propagation of this model yields output reachable sets with deterministic containment guarantees under a bounded aggregated residual assumption. The conservatism inherent in the lifted matrix-zonotope product is then mitigated by a decoder-only Transformer trained on labels obtained through directional contraction of the formal envelope via an exterior non-reachability certificate. Split conformal prediction restores distribution-free coverage at both per-step and trajectory levels without access to the true reachable-set hull. The framework is validated on a five-dimensional system with multiple unknown observation matrices.
\end{abstract}

\section{Introduction}

Reachability analysis computes the set of states or outputs a dynamical system can attain from given initial conditions under bounded inputs and disturbances, and underpins collision avoidance~\cite{althoff2008reachability}, safe motion planning~\cite{althoff2021set}, and formal verification~\cite{rierson2017developing}. Classical set-propagation methods~\cite{girard2005reachability,girard2006efficient} assume known dynamics, yet accurate models are often unavailable. This has motivated data-driven reachability analysis, where safety certificates are constructed directly from measured data.

Existing data-driven methods focus on state-space models. Matrix-zonotope-based approaches characterize all dynamics consistent with noisy state measurements and propagate reachable sets with formal guarantees~\cite{alanwar2021data,althoff2010reachability}; constrained matrix zonotopes further reduce conservatism~\cite{alanwar2023data}. For input-output models, reachability analysis of ARMAX systems has been developed using dependency-preserving symbolic zonotopes that mitigate the wrapping effect inherent in standard set arithmetic~\cite{luetzow2023reachability}; however, these methods require a pre-identified ARMAX model with known noise structure rather than operating directly from raw data with bounded-but-unknown residuals. Probabilistic reachable sets have been studied through kernel embeddings~\cite{thorpe2021sreachtools}, Christoffel-function level sets~\cite{devonport2023data}, Gaussian processes~\cite{griffioen2023data}, deep neural approximations~\cite{sivaramakrishnan2024stochastic}, and conformal inference~\cite{hashemi2024statistical}. Conformalized data-driven reachability with PAC guarantees~\cite{huang2026cddr} combines neural surrogates with the Learn-Then-Test calibration procedure, providing stronger trajectory-level coverage but requiring state-level access or known system structure for model-set construction. Data-driven set-based estimation~\cite{alanwar2022data} is closely related. Classical subspace identification methods such as N4SID~\cite{van1994n4sid} can also operate from input-output data, but require specific noise structure assumptions and do not provide set-valued reachability guarantees. All set-based methods above assume a known output matrix: in reachability~\cite{alanwar2021data,alanwar2023data} it recovers the latent state; in estimation~\cite{alanwar2022data} it maps outputs back to states. When the output matrix is also unknown, neither mechanism remains available, yet this setting is practically important whenever only noisy input-output data are accessible. Data-driven output reachability without knowledge of any system matrix remains an open problem.

This paper addresses this gap by applying the Cayley--Hamilton theorem~\cite{horn2012matrix} to derive an autoregressive input-output model that eliminates the latent state. The unknown parameter set is constructed as a matrix zonotope~\cite{althoff2010reachability} from offline data, and output reachable sets are propagated via zonotopic set operations. The framework requires no knowledge of any system matrix and relies only on a sufficiently exciting input-output trajectory, the system order, and a valid residual bound.

A key challenge is conservatism. In state-space data-driven reachability~\cite{alanwar2023data}, the regressor dimension equals the sum of the state and input dimensions and the over-approximation error remains moderate. The autoregressive formulation lifts the regressor dimension in proportion to the system order, causing the data-consistent parameter set to grow accordingly. Because this growth originates from the parameter set itself, standard remedies are ineffective: zonotope order reduction~\cite{girard2005reachability} and step-size refinement~\cite{kuhn1998rigorously} act only on the propagated set; richer representations such as constrained zonotopes~\cite{scott2016constrained} or constrained polynomial zonotopes~\cite{kochdumper2020sparse} cannot shrink the matrix zonotope encoding all data-consistent dynamics; constrained matrix zonotope refinement~\cite{alanwar2023data} and exact set-valued multiplication~\cite{zhang2025data} reduce the wrapping effect but cannot alter the underlying parameter set size. The conservatism is therefore structural. In set-based estimation, observations can mitigate conservatism~\cite{alanwar2022data}, but such observations are unavailable in reachability prediction. To address this, we train a decoder-only Transformer~\cite{vaswani2017attention} on certificate-assisted tightened labels to replace multi-step propagation with autoregressive set-valued prediction, and apply split conformal prediction~\cite{vovk2005algorithmic,shafer2008tutorial} to obtain distribution-free finite-sample coverage guarantees for realized output trajectory points.

The main contributions of this paper are as follows. First, we develop a data-driven framework for output reachability that eliminates the latent state via an autoregressive input-output representation, enabling reachable-set computation directly from noisy input-output data without knowledge of any system matrix. Second, we show how external non-reachability certificates, when available during training, can be exploited to generate tighter labels within the formal data-driven envelope, and we use these labels to train a decoder-only Transformer for output-set prediction~\cite{vaswani2017attention}. Third, we combine the learned predictor with split conformal prediction~\cite{vovk2005algorithmic} to provide distribution-free finite-sample coverage guarantees for realized output trajectories, both per-step and jointly over the prediction horizon.

The paper is organized as follows. Section~\ref{sec:preliminaries} introduces notation and formulates the problem. Section~\ref{sec:output-reach} presents the data-driven output reachability framework. Section~\ref{sec:tf-reach} develops the Transformer-enhanced approach with conformal calibration. Section~\ref{sec:experiments} reports numerical results, and Section~\ref{sec:conclusion} concludes.

\section{Preliminaries and Problem Formulation}\label{sec:preliminaries}

\subsection{Notations}
The sets of real and natural numbers are denoted by $\R$ and $\N$, respectively. 
We denote the matrix of zeros and ones of size $m \times n$ by $0_{m \times n}$ and $1_{m \times n}$, respectively, and the identity matrix by $I_n\in\R^{n\times n}$. Subscripts are omitted when dimensions are clear from context. 
For a matrix $A$, $A^\top$ denotes the transpose and $A^\dagger$ the Moore--Penrose pseudoinverse. For a vector $v$, $\|v\|_2$ and $\|v\|_\infty$ denote the Euclidean and infinity norms, respectively. The operator $\mathrm{proj}_{1:n_y}(\mathcal{S})$ denotes the projection of a set $\mathcal{S}$ onto its first $n_y$ coordinates. The probability of an event is denoted by $\Pr[\cdot]$.

\subsection{Set Representations}

\begin{definition}[Zonotope {\cite{kuhn1998rigorously}}] \label{def:zonotopes}
Given a center $c_{\mathcal{Z}} \in \mathbb{R}^{n_x}$ and a generator matrix
$G_{\mathcal{Z}}=\begin{bmatrix} g_{\mathcal{Z}}^{(1)} & \cdots & g_{\mathcal{Z}}^{(\gamma_{\mathcal{Z}})} \end{bmatrix}
\in \mathbb{R}^{n_x \times \gamma_{\mathcal{Z}}}$ with $\gamma_{\mathcal{Z}} \in \mathbb{N}$, the zonotope associated with $c_{\mathcal{Z}}$ and $G_{\mathcal{Z}}$ is defined by
\begin{equation}
\mathcal{Z} =
\Bigl\{
x \in \mathbb{R}^{n_x}
\;\Big|\;
x \!= \!c_{\mathcal{Z}} \!+ \sum_{i=1}^{\gamma_{\mathcal{Z}}} \alpha^{(i)} g_{\mathcal{Z}}^{(i)},
\alpha^{(i)} \!\in\![\shortminus 1,\!1]
\Bigr\}.
\end{equation}
We compactly write $\mathcal{Z}=\zono{c_{\mathcal{Z}},G_{\mathcal{Z}}}$.
\end{definition}

For zonotopes $\mathcal{Z}_1=\zono{c_{\mathcal{Z}_1},G_{\mathcal{Z}_1}}$ and
$\mathcal{Z}_2=\zono{c_{\mathcal{Z}_2},G_{\mathcal{Z}_2}}$, and a matrix
$L \in \mathbb{R}^{m \times n_x}$, standard operations remain in zonotopic form~\cite{althoff2010reachability}: the image under a linear map is
$L\mathcal{Z}_1=\zono{Lc_{\mathcal{Z}_1},LG_{\mathcal{Z}_1}}$; the Minkowski sum by $\mathcal{Z}_1 \oplus \mathcal{Z}_2=\zono{c_{\mathcal{Z}_1}+c_{\mathcal{Z}_2},[G_{\mathcal{Z}_1},G_{\mathcal{Z}_2}]}$, the Minkowski difference by $\mathcal{Z}_1 \ominus \mathcal{Z}_2 = \{z_1 \in \mathbb{R}^n \mid z_1 \oplus \mathcal{Z}_2 \subseteq \mathcal{Z}_1\}$; 
% the Minkowski sum is
% $\mathcal{Z}_1+\mathcal{Z}_2=\zono{c_{\mathcal{Z}_1}+c_{\mathcal{Z}_2},[G_{\mathcal{Z}_1},G_{\mathcal{Z}_2}]}$; the Minkowski difference is
% $\mathcal{Z}_1-\mathcal{Z}_2=\zono{c_{\mathcal{Z}_1}-c_{\mathcal{Z}_2},[G_{\mathcal{Z}_1},-G_{\mathcal{Z}_2}]}$; and the Cartesian product is
$\mathcal{Z}_1 \times \mathcal{Z}_2=\zono{\begin{bmatrix} c_{\mathcal{Z}_1} \\ c_{\mathcal{Z}_2} \end{bmatrix},\mathrm{blkdiag}(G_{\mathcal{Z}_1},G_{\mathcal{Z}_2})}$.
We use $+$ to denote the Minkowski sum when unambiguous from context. Likewise, $\mathcal{Z}_1-\mathcal{Z}_2$ denotes $\mathcal{Z}_1+(-\mathcal{Z}_2)$, not the Minkowski difference.

\begin{definition}[Matrix Zonotope {\cite[p.~52]{althoff2010reachability}}] \label{def:matzonotopes}
Given a center matrix $C_{\mathcal{M}} \in \mathbb{R}^{n_x \times p}$ and $\gamma_{\mathcal{M}} \in \mathbb{N}$ generator matrices $\tilde{G}_{\mathcal{M}}=\begin{bmatrix} G_{\mathcal{M}}^{(1)}&\dots&G_{\mathcal{M}}^{(\gamma_{\mathcal{M}})} \end{bmatrix} \in \mathbb{R}^{n_x \times (p \gamma_{\mathcal{M}})}$, a matrix zonotope is defined as
%\cite[Def.~1]{Girard2005} 
\begin{equation}
	\mathcal{M} {=} \Big\{\! X \in \mathbb{R}^{n_x \times p}   \Big| X {=} C_{\mathcal{M}} \!+ \!\sum_{i=1}^{\gamma_{\mathcal{M}}} \alpha^{(i)}  G_{\mathcal{M}}^{(i)} ,\!
	\alpha^{(i)} \!\in\![\shortminus 1,\!1]\! \Big\} .
\end{equation}
We use the shorthand notation $\mathcal{M} = \zono{C_{\mathcal{M}},\tilde{G}_{\mathcal{M}}}$ for a matrix zonotope. %\hfill $\square$
\end{definition}

\subsection{System Model and Assumptions}

We consider a discrete-time linear time-invariant system with unknown dynamics:
% \begin{align}\label{eq:ss}
%     x_{(k+1)} &= A \, x_{(k)} + B \, u_{(k)} + w_{(k)}, \nonumber \\
%     y_{(k)}   &= C \, x_{(k)} + v_{(k)},
% \end{align}
\begin{equation}\label{eq:ss}
\begin{aligned}
    x_{(k+1)} &= A \, x_{(k)} + B \, u_{(k)} + w_{(k)}, \\
    y_{(k)}   &= C \, x_{(k)} + v_{(k)}.
\end{aligned}
\end{equation}
where $x_{(k)} \in \R^{n_x}$ is the state, $u_{(k)} \in \R^{n_u}$ the input, $y_{(k)} \in \R^{n_y}$ the output, $w_{(k)} \in \R^{n_x}$ the process noise, and $v_{(k)} \in \R^{n_y}$ the measurement noise. The matrices $A$, $B$, $C$ are unknown; only input-output data $\{(u_{(k)}, y_{(k)})\}$ are available.

\begin{assumption}\label{ass:order}
    The system order $n_o = n_x$ is known. The pair $(C, A)$ is observable.
    \hfill $\lrcorner$
\end{assumption}

\begin{remark}
The system order in Assumption~\ref{ass:order} can be estimated via SVD of Hankel matrices; observability ensures the output contains sufficient information for the autoregressive representation in Section~\ref{sec:output-reach}.
\hfill $\lrcorner$
\end{remark}

\subsection{Problem Statement}

We start by introducing the exact output reachable set of system~\eqref{eq:ss}.

\begin{definition}[Exact output reachable set]\label{def:output-reach-set}
The output reachable set of system~\eqref{eq:ss} at step $k$ is
\begin{align}\label{eq:output-reach}
    \mathcal{Y}_k = \big\{ y_{(k)} \in \R^{n_y} \mid & x_{(j+1)} = A x_{(j)} + B u_{(j)} + w_{(j)}, \nonumber \\&
    y_{(j)} = C x_{(j)} + v_{(j)}, 
    x_{(0)} \in \mathcal{X}_0,  \nonumber \\& u_{(j)} \in \mathcal{U}_j, 
    j = 0, \dots, k \big\},
\end{align}
where $\mathcal{X}_0$ and $\mathcal{U}_j$ denote the initial state set and the input set, respectively.
\hfill $\lrcorner$
\end{definition}

The goal is to compute an over-approximation $\hat{\mathcal{Y}}_k \supseteq \mathcal{Y}_k$ using only measured input-output data, without knowledge of $A$, $B$, $C$ and without reconstructing $x_{(k)}$.

\subsection{Conformal Prediction}

Split conformal prediction~\cite{vovk2005algorithmic} provides distribution-free coverage guarantees. Given a calibration set $\{(X_i, Y_i)\}_{i=1}^{n_\text{cal}}$ of exchangeable pairs, a predictor $\hat{f}$, and a miscoverage level $\delta \in (0,1)$, nonconformity scores $s_i = \|Y_i - \hat{f}(X_i)\|$ are computed. The conformal quantile is
\begin{equation}\label{eq:conformal}
    \hat{q} = \text{Quantile}\Big(\{s_i\}_{i=1}^{n_\text{cal}}, \; \frac{\lceil (n_\text{cal}+1)(1-\delta) \rceil}{n_\text{cal}}\Big).
\end{equation}
where $\mathrm{Quantile}(\mathcal{S}, \tau)$ denotes the $\lceil \tau |\mathcal{S}| \rceil$-th smallest element of a finite set $\mathcal{S}$.
The prediction set $\mathcal{C}(X) = \{y : \|y - \hat{f}(X)\| \leq \hat{q}\}$ satisfies
$\Pr[Y_{n_\text{cal}+1} \in \mathcal{C}(X_{n_\text{cal}+1})] \geq 1 - \delta$.

\section{Data-Driven Output Reachability}\label{sec:output-reach}

\subsection{Output Autoregressive Representation}

The key observation enabling output-only reachability is that the state $x_{(k)}$, which is latent since only input-output data are available, can be eliminated algebraically from system~\eqref{eq:ss} by the Cayley--Hamilton theorem~\cite{horn2012matrix}, yielding the following autoregressive input-output representation.

\begin{lemma}[Output autoregressive form {\cite[Ch.~7]{hespanha2018linear}}]\label{lm:armax}
Consider system~\eqref{eq:ss} under Assumption~\ref{ass:order}. For all $k \geq n_o$ with $n_o = n_x$, the output satisfies
\begin{align}\label{eq:armax}
    y_{(k)} &= -\sum_{i=1}^{n_o} a_i \, y_{(k-i)} + \sum_{i=1}^{n_o} b_i \, u_{(k-i)} + \varepsilon_{(k)},
\end{align}
where $a_1, \dots, a_{n_o}$ are the coefficients of the characteristic polynomial of $A$ with $a_0 = 1$, the input coefficients are $b_i = \sum_{j=0}^{i-1} a_j \, C A^{i-1-j} B$, and the aggregated residual is
\begin{equation}\label{eq:noise_agg}
    \varepsilon_{(k)} = \sum_{i=0}^{n_o - 1}\! \Big(\sum_{j=0}^{i} a_j \, C A^{i-j}\Big) w_{(k-1-i)}  + v_{(k)} + \sum_{i=1}^{n_o} a_i \, v_{(k-i)}.
\end{equation}
\end{lemma}

% Proof removed per standard result; see \cite[Ch.~4]{ljung1999system}.
% \begin{proof}
% Iterating the state equation in~\eqref{eq:ss} from step $k{-}n_o$ to step $k$ yields
% $x_{(k)} = A^{n_o} x_{(k-n_o)} + \sum_{i=0}^{n_o-1} A^i (B\,u_{(k-1-i)} + w_{(k-1-i)})$.
% Left-multiplying by $C$ and adding $v_{(k)}$ gives
% \begin{align*}
%     y_{(k)} = C A^{n_o} x_{(k-n_o)} &+ \sum_{i=0}^{n_o-1} C A^i B\,u_{(k-1-i)} \\
%     &+ \sum_{i=0}^{n_o-1} C A^i w_{(k-1-i)} + v_{(k)}.
% \end{align*}
% By Lemma~\ref{lm:cayley-hamilton}, $C A^{n_o} x_{(k-n_o)} = -\sum_{j=1}^{n_o} a_j\, C A^{n_o-j} x_{(k-n_o)}$.
% For each $j \in \{1,\dots,n_o\}$, iterating the state equation from $k{-}n_o$ to $k{-}j$ gives $x_{(k-j)} = A^{n_o-j} x_{(k-n_o)} + \text{(input/noise terms)}$, so
% $C A^{n_o-j} x_{(k-n_o)} = C x_{(k-j)} - C\sum_{\ell=0}^{n_o-j-1} A^\ell(Bu_{(k-j-1-\ell)} + w_{(k-j-1-\ell)})$.
% Since $Cx_{(k-j)} = y_{(k-j)} - v_{(k-j)}$ by the output equation in~\eqref{eq:ss}, substituting and collecting all output, input, and noise contributions produces~\eqref{eq:armax}--\eqref{eq:noise_agg}. No inversion of $C$ is required; the state $x$ is eliminated entirely through the algebraic identity $Cx_{(k-j)} = y_{(k-j)} - v_{(k-j)}$.
% \end{proof}

\begin{assumption}\label{ass:residual-bound}
There exists a known zonotope $\mathcal{Z}_\varepsilon$ containing the origin such that $\varepsilon_{(k)}\in\mathcal{Z}_\varepsilon$ for all $k\ge n_o$.
\hfill $\lrcorner$
\end{assumption}

\begin{remark}\label{rem:residual-bound}
Bounded-uncertainty assumptions are standard in set-valued estimation and reachability~\cite{girard2005reachability,girard2006efficient,alanwar2021data,alanwar2023data,milanese2004set,combastel2015zonotopes}. Assumption~\ref{ass:residual-bound} is strictly weaker than separately bounding $\mathcal{Z}_w$ and $\mathcal{Z}_v$~\cite{alanwar2021data,alanwar2023data,alanwar2022data}, since $\mathcal{Z}_\varepsilon$ aggregates both noise channels into a single input-output residual without requiring $A$, $B$, or $C$. In practice, $\mathcal{Z}_\varepsilon$ can be estimated from a held-out trajectory by computing the empirical residual range and inflating by a safety margin. A conservative choice affects only tightness, not correctness of Proposition~\ref{prop:reach}. The conformal guarantee in Theorem~\ref{thm:coverage} is independent of $\mathcal{Z}_\varepsilon$ entirely.
\hfill $\lrcorner$
\end{remark}

To enable autoregressive propagation in the output space, we introduce the lifted observation vector
\begin{equation}\label{eq:lifted}
    z_{(k)}^\top = \begin{bmatrix} y_{(k-1)}^\top & \cdots & y_{(k-n_o)}^\top & u_{(k-1)}^\top & \cdots & u_{(k-n_o)}^\top \end{bmatrix}
\end{equation}
in $\R^{n_o(n_y + n_u)}$. By Lemma~\ref{lm:armax}, the output $y_{(k)}$ is a linear function of $z_{(k)}$ and the current input $u_{(k)}$ up to the aggregated residual $\varepsilon_{(k)}$. The next lifted vector $z_{(k+1)}$ is obtained by shifting $z_{(k)}$ and appending $y_{(k)}$ and $u_{(k)}$, so that the transition
\begin{equation}\label{eq:lifted-transition}
    z_{(k+1)} = \Theta^\top \begin{bmatrix} z_{(k)} \\ u_{(k)} \end{bmatrix} + \varepsilon_{(k)}
\end{equation}
holds for an unknown parameter matrix $\Theta^\top \in \R^{n_o(n_y+n_u) \times (n_o(n_y+n_u)+n_u)}$ and a bounded residual vector $\varepsilon_{(k)} \in \mathcal{Z}_\varepsilon$. The first $n_y$ rows of~\eqref{eq:lifted-transition} recover $y_{(k)}$ from Lemma~\ref{lm:armax}, while the remaining rows encode the deterministic shift structure of the observation window; accordingly, only the first $n_y$ components of $\varepsilon_{(k)}$ are nonzero, and the matrix zonotope $\mzon_\varepsilon$ in~\eqref{eq:Cme}--\eqref{eq:Gme} can be restricted to this block to reduce conservatism in the model set.

\subsection{Data-Driven Model Set Construction}

We collect $K$ input-output trajectories. For the $i$-th trajectory, starting from index $n_o$ so that the initial-condition transient has decayed, the lifted observation and augmented regressor data matrices are
\begin{align*}
    Z_+ = \begin{bmatrix} z^{(1)}_{(n_o+1)} & \! \cdots \! & z^{(K)}_{(T_K+1)} \end{bmatrix}, 
    \Phi_- = \begin{bmatrix} z^{(1)}_{(n_o)} &\!  \cdots \! & z^{(K)}_{(T_K)} \\ u^{(1)}_{(n_o)} & \! \cdots\!  & u^{(K)}_{(T_K)} \end{bmatrix}.
\end{align*}
Let $T$ denote the total number of data columns. The stacked residual matrix $E_- \in \R^{n_o(n_y+n_u) \times T}$ belongs to a matrix zonotope $\mzon_\varepsilon = \zono{C_{\mzon_\varepsilon}, \tilde{G}_{\mzon_\varepsilon}}$, constructed by concatenating $\mathcal{Z}_\varepsilon$ across all $T$ columns~\cite{alanwar2023data}:
\begin{align}
    C_{\mzon_\varepsilon}
    &= \begin{bmatrix}
        c_{\mathcal{Z}_\varepsilon} & \cdots & c_{\mathcal{Z}_\varepsilon}
    \end{bmatrix}, \label{eq:Cme} \\
    G_{\mzon_\varepsilon}^{(j+(i-1)T)}
    &= \begin{bmatrix}
        0_{p \times (j-1)} &
        g_{\mathcal{Z}_\varepsilon}^{(i)} &
        0_{p \times (T-j)}
    \end{bmatrix}, \label{eq:Gme}
\end{align}
for all \(i \in \{1,\dots,\gamma_{\mathcal{Z}_\varepsilon}\}\) and
\(j \in \{1,\dots,T\}\), where \(p = n_o(n_y+n_u)\).
The consistent parameter set is
\begin{equation}\label{eq:Nsig}
    \mathcal{N}_\Sigma = \big\{ \Theta^\top \mid Z_+ = \Theta^\top \Phi_- + E_-, \; E_- \in \mzon_\varepsilon \big\}.
\end{equation}

\begin{assumption}\label{ass:rank}
    The augmented regressor data matrix $\Phi_- \in \R^{(n_o(n_y+n_u)+n_u) \times T}$ has full row rank, i.e., $\mathrm{rank}(\Phi_-) = n_o(n_y + n_u) + n_u$.
    \hfill $\lrcorner$
\end{assumption}

\begin{lemma}\label{lm:model_set}
Under Assumptions~\ref{ass:residual-bound}--\ref{ass:rank}, the matrix zonotope
\begin{equation}\label{eq:Msigma}
    \mzon_\Sigma = (Z_+ - \mzon_\varepsilon) \, \Phi_-^\dagger
\end{equation}
satisfies $\Theta_\tr^\top \in \mzon_\Sigma$ and $\mathcal{N}_\Sigma \subseteq \mzon_\Sigma$.
\end{lemma}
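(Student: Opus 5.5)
The plan is to first show that any data-consistent parameter matrix lies in $\mzon_\Sigma$, and then obtain the claim for $\Theta_\tr^\top$ as a special case. The only tools needed are the full row rank of $\Phi_-$ from Assumption~\ref{ass:rank} and the fact that right multiplication by a fixed matrix maps a matrix zonotope into a matrix zonotope.

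\textbf{Step 1: the true parameter is data-consistent.} Lemma~\ref{lm:armax} and the lifted transition~\eqref{eq:lifted-transition} hold column by column for every recorded sample with index $k \ge n_o$. Stacking these columns gives
\begin{equation*}
Z_+ = \Theta_\tr^\top \Phi_- + E_-,
\end{equation*}
where the $j$-th column of $E_-$ is the residual $\varepsilon_{(k)}$ at that sample. By Assumption~\ref{ass:residual-bound}, each column lies in $\mathcal{Z}_\varepsilon$. I will write column $j$ as
\begin{equation*}
c_{\mathcal{Z}_\varepsilon} + \sum_i \alpha^{(i,j)} g^{(i)}_{\mathcal{Z}_\varepsilon}, \qquad \alpha^{(i,j)} \in [-1,1].
\end{equation*}
Because the generators $G^{(j+(i-1)T)}_{\mzon_\varepsilon}$ in~\eqref{eq:Gme} place $g^{(i)}_{\mathcal{Z}_\varepsilon}$ only in column $j$, each column gets its own independent coefficients. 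Hence $E_- = C_{\mzon_\varepsilon} + \sum_{i,j}\alpha^{(i,j)} G^{(j+(i-1)T)}_{\mzon_\varepsilon} \in \mzon_\varepsilon$, and so $\Theta_\tr^\top \in \mathcal{N}_\Sigma$.

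\textbf{Step 2: $\mathcal{N}_\Sigma \subseteq \mzon_\Sigma$.} Take any $\Theta^\top \in \mathcal{N}_\Sigma$, so $Z_+ - E_- = \Theta^\top \Phi_-$ for some $E_- \in \mzon_\varepsilon$. Right-multiply both sides by $\Phi_-^\dagger$. Assumption~\ref{ass:rank} says $\Phi_-$ has full row rank, so $\Phi_-^\dagger = \Phi_-^\top(\Phi_-\Phi_-^\top)^{-1}$ and $\Phi_-\Phi_-^\dagger = I$. This gives
\begin{equation*}
\Theta^\top = (Z_+ - E_-)\Phi_-^\dagger.
\end{equation*}
The set $Z_+ - \mzon_\varepsilon$ is the matrix zonotope $\zono{Z_+ - C_{\mzon_\varepsilon}, -\tilde G_{\mzon_\varepsilon}}$. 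Multiplying it on the right by $\Phi_-^\dagger$ is linear, so it yields the matrix zonotope with center $(Z_+ - C_{\mzon_\varepsilon})\Phi_-^\dagger$ and generators $-G^{(l)}_{\mzon_\varepsilon}\Phi_-^\dagger$, using the same coefficients. Therefore $(Z_+ - E_-)\Phi_-^\dagger \in \mzon_\Sigma$, which proves $\mathcal{N}_\Sigma \subseteq \mzon_\Sigma$. Combined with Step 1, this also gives $\Theta_\tr^\top \in \mzon_\Sigma$.

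\textbf{Main obstacle.} The delicate point is Step 1: checking that the true transition really matches~\eqref{eq:lifted-transition} on every data column, including the shift rows. Those rows hold exactly, because $z_{(k+1)}$ is formed from the entries of $z_{(k)}$ and $u_{(k)}$ with zero residual. So $\mathcal{Z}_\varepsilon$ has to be read as the lifted residual set whose last $p-n_y$ components are zero, consistent with the remark after~\eqref{eq:lifted-transition}. I would state this interpretation explicitly and confirm that every sample index satisfies $k \ge n_o$, so that Lemma~\ref{lm:armax} and Assumption~\ref{ass:residual-bound} apply. Once that is settled, the rest is direct algebra.
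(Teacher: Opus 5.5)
Your proposal is correct and follows the paper's proof. The core step is the same: right-multiply $Z_+ - E_- = \Theta^\top \Phi_-$ by $\Phi_-^\dagger$, use $\Phi_-\Phi_-^\dagger = I$ from full row rank, and read off membership in $\mzon_\Sigma = (Z_+ - \mzon_\varepsilon)\Phi_-^\dagger$. Your Step~1 adds something the paper's proof leaves implicit for the claim $\Theta_\tr^\top \in \mzon_\Sigma$: it shows $\Theta_\tr^\top \in \mathcal{N}_\Sigma$ via column-wise residual containment, noting that the shift rows carry zero residual, so $\mathcal{Z}_\varepsilon$ must be read as zero-padded to $\mathbb{R}^{p}$.
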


\begin{proof}
The argument extends~\cite[Lemma~1]{alanwar2023data} from the state-space to the output autoregressive setting.
For any $\Theta^\top \in \mathcal{N}_\Sigma$, there exists $E_- \in \mzon_\varepsilon$ such that $\Theta^\top \Phi_- = Z_+ - E_-$. Since $\Phi_-$ has full row rank by Assumption~\ref{ass:rank}, right-multiplying by $\Phi_-^\dagger$ gives $\Theta^\top = (Z_+ - E_-) \Phi_-^\dagger \in \mzon_\Sigma$.
\end{proof}

\subsection{Output Reachable Set Propagation}

Given initial output zonotopes $\hat{\mathcal{Y}}_0, \dots, \hat{\mathcal{Y}}_{n_o-1}$ and the model set $\mzon_\Sigma$, the lifted observation zonotope $\hat{\mathcal{Z}}_k$ and the output reachable set are propagated jointly.

\begin{proposition}\label{prop:reach}
Under Assumptions~\ref{ass:residual-bound}--\ref{ass:rank}, for $k \geq n_o$,
\begin{equation}\label{eq:reach_prop}
    \hat{\mathcal{Z}}_{k+1} = \mzon_\Sigma \, (\hat{\mathcal{Z}}_k \times \mathcal{U}_k) \oplus \mathcal{Z}_\varepsilon,
\end{equation}
and the output reachable set $\hat{\mathcal{Y}}_k$ is obtained by projecting $\hat{\mathcal{Z}}_{k+1}$ onto the first $n_y$ coordinates. The containment $\mathcal{Y}_k \subseteq \hat{\mathcal{Y}}_k$ holds for all $k \geq n_o$.
\end{proposition}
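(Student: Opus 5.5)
The plan is induction on $k$ with the invariant that the lifted vector is contained in the propagated set. Specifically, we aim to show $z_{(k)} \in \hat{\mathcal{Z}}_k$ for every trajectory of \eqref{eq:ss} compatible with Definition~\ref{def:output-reach-set}, i.e., with $x_{(0)}\in\mathcal{X}_0$, $u_{(j)}\in\mathcal{U}_j$, and admissible noise. The base case is $k=n_o$. There $z_{(n_o)}$ stacks $y_{(n_o-1)},\dots,y_{(0)}$ and $u_{(n_o-1)},\dots,u_{(0)}$. Taking $\hat{\mathcal{Z}}_{n_o}$ to be the Cartesian product $\hat{\mathcal{Y}}_{n_o-1}\times\cdots\times\hat{\mathcal{Y}}_0\times\mathcal{U}_{n_o-1}\times\cdots\times\mathcal{U}_0$, with the initial output zonotopes assumed to enclose $\mathcal{Y}_0,\dots,\mathcal{Y}_{n_o-1}$, the base case follows from the definition of the Cartesian product of zonotopes.

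For the inductive step, fix a realization and suppose $z_{(k)}\in\hat{\mathcal{Z}}_k$. Lemma~\ref{lm:armax} holds for $k\ge n_o$, so the exact transition \eqref{eq:lifted-transition} holds with the true parameter $\Theta_\tr^\top$ and a residual $\varepsilon_{(k)}$. By Assumption~\ref{ass:residual-bound}, this residual lies in $\mathcal{Z}_\varepsilon$; its shift rows are identically zero and are covered once $\mathcal{Z}_\varepsilon$ is embedded with zero rows. Since $u_{(k)}\in\mathcal{U}_k$, we have $[z_{(k)}^\top\ u_{(k)}^\top]^\top\in\hat{\mathcal{Z}}_k\times\mathcal{U}_k$. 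Lemma~\ref{lm:model_set}, which uses Assumption~\ref{ass:rank}, gives $\Theta_\tr^\top\in\mzon_\Sigma$. Therefore $\Theta_\tr^\top[z_{(k)};u_{(k)}]$ belongs to the set product $\{Mx : M\in\mzon_\Sigma,\ x\in\hat{\mathcal{Z}}_k\times\mathcal{U}_k\}$, and adding $\varepsilon_{(k)}$ places $z_{(k+1)}$ in the exact set $\mzon_\Sigma(\hat{\mathcal{Z}}_k\times\mathcal{U}_k)\oplus\mathcal{Z}_\varepsilon$.

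The main obstacle is that the product of a matrix zonotope and a zonotope is not a zonotope. The operation in \eqref{eq:reach_prop} therefore has to be read as the standard zonotopic over-approximation of this product, as in \cite{althoff2010reachability,alanwar2021data}. We would invoke the known enclosure property: writing $M=C+\sum\alpha_i G_i$ and $x=c+\sum\beta_j g_j$, the bilinear terms $\alpha_i\beta_j G_i g_j$ are enclosed by generators $G_i g_j$ with fresh factors in $[-1,1]$, since $\alpha_i\beta_j\in[-1,1]$. This gives containment of the exact product in the computed zonotope. Any order reduction applied afterwards must also be outer-approximating, so the invariant is preserved.

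Finally, the first $n_y$ rows of $z_{(k+1)}$ equal $y_{(k)}$ by \eqref{eq:lifted}, and projection is a linear map. Hence $y_{(k)}\in\mathrm{proj}_{1:n_y}(\hat{\mathcal{Z}}_{k+1})=\hat{\mathcal{Y}}_k$. Since the realization was arbitrary, $\mathcal{Y}_k\subseteq\hat{\mathcal{Y}}_k$ for all $k\ge n_o$. A subtle point should be noted explicitly: $\mathcal{U}_k$ in the lifted set must be the same input set used in the exact reachable set, and the residual bound must hold along reachable trajectories, not only along the data trajectories. Assumption~\ref{ass:residual-bound} supplies the latter because it is stated for all $k\ge n_o$.
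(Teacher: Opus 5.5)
Your proposal is correct and follows essentially the same argument as the paper. The paper's proof is precisely your inductive step (true parameter in $\mzon_\Sigma$ by Lemma~\ref{lm:model_set}, $[z_{(k)}; u_{(k)}] \in \hat{\mathcal{Z}}_k \times \mathcal{U}_k$, residual in $\mathcal{Z}_\varepsilon$) followed by projection onto the first $n_y$ coordinates, with the explicit induction from the initial output zonotopes appearing separately as Corollary~\ref{thm:init-indep}, and your remarks on the zonotopic enclosure of the matrix-zonotope product and on the base-case hypothesis make explicit what the paper leaves under ``by construction.''
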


\begin{proof}
Since $\Theta_\tr^\top \in \mzon_\Sigma$ by Lemma~\ref{lm:model_set} and $[z_{(k)}^\top, u_{(k)}^\top]^\top \in \hat{\mathcal{Z}}_k \times \mathcal{U}_k$ by construction, the true lifted vector $z_{(k+1)} = \Theta_\tr^\top [z_{(k)}; u_{(k)}] + \varepsilon_{(k)}$ belongs to $\mzon_\Sigma \, (\hat{\mathcal{Z}}_k \times \mathcal{U}_k) \oplus \mathcal{Z}_\varepsilon$ for all admissible inputs and noise. The output $y_{(k)}$ occupies the first $n_y$ entries of $z_{(k+1)}$, so projecting yields $\mathcal{Y}_k \subseteq \hat{\mathcal{Y}}_k$.
\end{proof}

\begin{corollary}[Trajectory-level containment]\label{thm:init-indep}
Let $\hat{\mathcal{Y}}_0, \dots, \hat{\mathcal{Y}}_{n_o-1}$ be initial output zonotopes used to construct the initial lifted set $\hat{\mathcal{Z}}_{n_o}$. Suppose that for every realized trajectory $\{y_{(j)}\}_{j=0}^{n_o-1}$ of the true system~\eqref{eq:ss} with $u_{(j)} \in \mathcal{U}_j$, the individual measurements satisfy $y_{(j)} \in \hat{\mathcal{Y}}_j$ for $j = 0, \dots, n_o{-}1$. Then, the data-driven output reachable set from Proposition~\ref{prop:reach} satisfies $y_{(k)} \in \hat{\mathcal{Y}}_k$ for all $k \geq n_o$ along that trajectory.
\end{corollary}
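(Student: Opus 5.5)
The argument is an induction on $k$ along one fixed realized trajectory, with Proposition~\ref{prop:reach} reused pointwise instead of set-wise. Fix a realization of the true system~\eqref{eq:ss} with $u_{(j)} \in \mathcal{U}_j$ for all $j$ and with residuals $\varepsilon_{(k)} \in \mathcal{Z}_\varepsilon$ (Assumption~\ref{ass:residual-bound}). Let $z_{(k)}$ be its lifted vector~\eqref{eq:lifted}. The induction hypothesis is $z_{(k)} \in \hat{\mathcal{Z}}_k$ for every $k \geq n_o$. Given that, the conclusion $y_{(k)} \in \hat{\mathcal{Y}}_k$ follows because $y_{(k)}$ is the first $n_y$ entries of $z_{(k+1)}$ and $\hat{\mathcal{Y}}_k = \mathrm{proj}_{1:n_y}(\hat{\mathcal{Z}}_{k+1})$.

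\emph{Base case.} I take $\hat{\mathcal{Z}}_{n_o}$ to be the Cartesian product
\[
\hat{\mathcal{Y}}_{n_o-1} \times \cdots \times \hat{\mathcal{Y}}_0 \times \mathcal{U}_{n_o-1} \times \cdots \times \mathcal{U}_0,
\]
ordered as in~\eqref{eq:lifted}; this is the natural reading of ``used to construct''. By hypothesis each $y_{(j)} \in \hat{\mathcal{Y}}_j$ for $j = 0, \dots, n_o-1$, and each $u_{(j)} \in \mathcal{U}_j$. Each block of $z_{(n_o)}$ therefore lies in its factor, so $z_{(n_o)} \in \hat{\mathcal{Z}}_{n_o}$.

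\emph{Inductive step.} Assume $z_{(k)} \in \hat{\mathcal{Z}}_k$ for some $k \geq n_o$. Then $[z_{(k)}; u_{(k)}] \in \hat{\mathcal{Z}}_k \times \mathcal{U}_k$. By the lifted transition~\eqref{eq:lifted-transition},
\[
z_{(k+1)} = \Theta_\tr^\top [z_{(k)}; u_{(k)}] + \varepsilon_{(k)},
\]
where $\Theta_\tr^\top \in \mzon_\Sigma$ by Lemma~\ref{lm:model_set} and $\varepsilon_{(k)} \in \mathcal{Z}_\varepsilon$. By the definition of the set-valued product $\mzon_\Sigma(\cdot)$ and of the Minkowski sum, $z_{(k+1)} \in \mzon_\Sigma(\hat{\mathcal{Z}}_k \times \mathcal{U}_k) \oplus \mathcal{Z}_\varepsilon = \hat{\mathcal{Z}}_{k+1}$. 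Projecting onto the first $n_y$ coordinates gives $y_{(k)} \in \hat{\mathcal{Y}}_k$.

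\emph{Main obstacle.} The delicate point is that~\eqref{eq:lifted-transition} must hold exactly along the realized trajectory from index $n_o$ on. This needs Lemma~\ref{lm:armax} to be valid for every $k \geq n_o$ with the true $\Theta_\tr$, including the shift rows, which carry zero residual. It also needs the bound $\varepsilon_{(k)} \in \mathcal{Z}_\varepsilon$ to hold for the realized noise sequence, not only for the training data. I would verify this explicitly by restating Lemma~\ref{lm:armax} with the measured outputs $y_{(k-i)}$, which already include the measurement noise. Because the regressors are the measured outputs themselves, no separate bound on $v$ is needed and the induction closes.

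If the matrix-zonotope product used in computation is an over-approximating enclosure rather than the exact set image, the inductive step only improves: $z_{(k+1)}$ still lies in the computed set.
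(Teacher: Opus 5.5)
Your proposal is correct and follows essentially the same route as the paper's proof: the base case comes from reading $\hat{\mathcal{Z}}_{n_o}$ as the Cartesian product $\hat{\mathcal{Y}}_{n_o-1}\times\cdots\times\hat{\mathcal{Y}}_0\times\mathcal{U}_{n_o-1}\times\cdots\times\mathcal{U}_0$. The inductive step uses $\Theta_{\mathrm{tr}}^\top$ in the model set (Lemma~\ref{lm:model_set}) together with $\varepsilon_{(k)}\in\mathcal{Z}_\varepsilon$ to get $z_{(k+1)}\in\hat{\mathcal{Z}}_{k+1}$, and projecting onto the first $n_y$ coordinates finishes. Your extra remarks make explicit what the paper leaves implicit without changing the argument: the need for $u_{(k)}\in\mathcal{U}_k$ for all $k\ge n_o$, the zero residual in the shift rows, and robustness to an over-approximating matrix-zonotope product.
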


\begin{proof}
Fix a trajectory of the true system. By assumption, $y_{(j)} \in \hat{\mathcal{Y}}_j$ for $j = 0, \dots, n_o{-}1$ and $u_{(j)} \in \mathcal{U}_j$. The initial lifted vector $z_{(n_o)} = [y_{(n_o-1)}^\top, \dots, y_{(0)}^\top, u_{(n_o-1)}^\top, \dots, u_{(0)}^\top]^\top$ therefore satisfies $z_{(n_o)} \in \hat{\mathcal{Z}}_{n_o} = \hat{\mathcal{Y}}_{n_o-1} \times \cdots \times \hat{\mathcal{Y}}_0 \times \mathcal{U}_{n_o-1} \times \cdots \times \mathcal{U}_0$. Since $\Theta_{\tr}^\top \in \mzon_\Sigma$ by Lemma~\ref{lm:model_set} and $\varepsilon_{(k)} \in \mathcal{Z}_\varepsilon$, the propagation~\eqref{eq:reach_prop} gives $z_{(k+1)} \in \hat{\mathcal{Z}}_{k+1}$ for all $k \geq n_o$ by induction. Projecting onto the first $n_y$ coordinates yields $y_{(k)} \in \hat{\mathcal{Y}}_k$.
\end{proof}

The procedure is summarized in Algorithm~\ref{alg:dd-reach}.

\begin{algorithm}[t]
\caption{Data-Driven Output Reachability}
\label{alg:dd-reach}
\textbf{Input:} $(Z_+, \Phi_-)$, initial lifted set $\hat{\mathcal{Z}}_{n_o}$, noise $\mzon_\varepsilon$, input sets $\mathcal{U}_k$, horizon $N$, order $\rho_{\max}$\\
\textbf{Output:} $\hat{\mathcal{Y}}_{n_o}, \dots, \hat{\mathcal{Y}}_N$
\begin{algorithmic}[1]
\State $\mzon_\Sigma \gets (Z_+ - \mzon_\varepsilon) \, \Phi_-^\dagger$ 
\For{$k = n_o, \dots, N$}
    \State $\hat{\mathcal{Z}}_{k+1} \gets \mzon_\Sigma \, (\hat{\mathcal{Z}}_k \times \mathcal{U}_k) \oplus \mathcal{Z}_\varepsilon$
    \State $\hat{\mathcal{Y}}_k \gets \mathrm{proj}_{1:n_y}(\hat{\mathcal{Z}}_{k+1})$
    % \State $\hat{\mathcal{Z}}_{k+1} \gets \operator{reduce}(\hat{\mathcal{Z}}_{k+1}, \rho_{\max})$
\EndFor
\end{algorithmic}
\end{algorithm}

\begin{remark}\label{rem:complexity}
The autoregressive lifting raises the regressor dimension to $n_o(n_y{+}n_u){+}n_u$, substantially exceeding the $n_x{+}n_u$ of the state-space setting~\cite{alanwar2023data}. The resulting over-approximation in $\mzon_\Sigma \, (\hat{\mathcal{Z}}_k \times \mathcal{U}_k)$ is structural and cannot be reduced by zonotope order reduction or step-size refinement, motivating the learning-based surrogate in Section~\ref{sec:tf-reach}.
\hfill $\lrcorner$
\end{remark}

\section{Transformer-Enhanced Output Reachability}\label{sec:tf-reach}

Although Algorithm~\ref{alg:dd-reach} provides formal containment, the autoregressive lifting inflates the regressor dimension well beyond the state-space case~\cite{alanwar2023data}, and the correspondingly larger matrix zonotope $\mzon_\Sigma$ introduces substantial over-approximation that neither the reduce operator nor step-size refinement can mitigate. Moreover, the required initial lifted sets $\hat{\mathcal{Z}}_0,\dots,\hat{\mathcal{Z}}_{n_o-1}$ are unavailable in the purely data-driven setting. These limitations motivate the learning-based surrogate below.

\subsection{Zonotope Fitting from Trajectory Samples}

We first address the initialization problem. Given a finite point set $\{p^{(t)}\}_{t=1}^M \subset \R^{n_y}$, let $\bar{c} = \frac{1}{M}\sum_{t=1}^M p^{(t)}$ denote the sample mean and $P \in \R^{M \times n_y}$ the centered data matrix with rows $(p^{(t)}-\bar{c})^\top$. Let $P = V \Sigma U^\top$ be its singular value decomposition with right singular vectors $u_1,\dots,u_{n_y}$ (the columns of $U$).

\begin{lemma}[PCA-oriented containing zonotope]\label{lm:pca-fit}
Given $\{p^{(t)}\}_{t=1}^M \subset \R^{n_y}$ with sample mean $\bar{c}$ and right singular vectors $u_1,\dots,u_{n_y}$ as above, define the directional radii
\begin{equation}\label{eq:pca-radius}
\rho_i = \max_{1 \le t \le M} \bigl|u_i^\top (p^{(t)}-\bar{c})\bigr|, \quad i=1,\dots,n_y,
\end{equation}
and the generator matrix $G_{\mathrm{fit}} = [\rho_1 u_1, \dots, \rho_{n_y} u_{n_y}] \in \R^{n_y \times n_y}$. Then the zonotope $\mathcal{Z}_{\mathrm{fit}} = \zono{\bar{c},\, G_{\mathrm{fit}}}$ satisfies $\{p^{(t)}\}_{t=1}^M \subset \mathcal{Z}_{\mathrm{fit}}$.
\end{lemma}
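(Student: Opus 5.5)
The plan is to exhibit, for each sample point, explicit generator coefficients in $[-1,1]$. The key fact is that the right singular vectors $u_1,\dots,u_{n_y}$ of $P$ form an orthonormal basis of $\R^{n_y}$, since $U\in\R^{n_y\times n_y}$ is orthogonal in the full SVD. This holds even when $P$ is rank deficient, because the SVD can always be completed to a full orthogonal $U$. Hence every vector $d\in\R^{n_y}$ can be expanded as $d=\sum_{i=1}^{n_y}(u_i^\top d)\,u_i$.

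Fix $t$ and set $d^{(t)}=p^{(t)}-\bar{c}$. Expanding in the basis gives
\begin{equation*}
p^{(t)}=\bar{c}+\sum_{i=1}^{n_y}\bigl(u_i^\top d^{(t)}\bigr)u_i .
\end{equation*}
For every $i$ with $\rho_i>0$, define $\alpha_i^{(t)}=u_i^\top d^{(t)}/\rho_i$. Then $(u_i^\top d^{(t)})u_i=\alpha_i^{(t)}\rho_i u_i$, and by the definition \eqref{eq:pca-radius} of $\rho_i$ as a maximum over $t$ we have $|u_i^\top d^{(t)}|\le\rho_i$, so $\alpha_i^{(t)}\in[-1,1]$.

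The only delicate point is the degenerate case $\rho_i=0$. There \eqref{eq:pca-radius} forces $u_i^\top d^{(t)}=0$ for all $t$, so the $i$-th term in the expansion vanishes, and any coefficient works; take $\alpha_i^{(t)}=0$. In both cases $p^{(t)}=\bar{c}+\sum_i\alpha_i^{(t)}\rho_i u_i$ with all $\alpha_i^{(t)}\in[-1,1]$. By Definition~\ref{def:zonotopes} with $G_{\mathrm{fit}}=[\rho_1u_1,\dots,\rho_{n_y}u_{n_y}]$, this means $p^{(t)}\in\mathcal{Z}_{\mathrm{fit}}$, and since $t$ was arbitrary the containment follows.

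Two remarks on what the argument does and does not use. It relies only on the orthonormality of $\{u_i\}$ and on $\bar c$ being a fixed center; neither the choice of the sample mean nor the PCA ordering of the directions plays any role in containment, since they affect only tightness. The main point needing care is therefore the completeness of the basis, that is, using the full rather than the thin SVD when $M<n_y$ or $P$ is rank deficient, together with the $\rho_i=0$ case, both of which are handled above.
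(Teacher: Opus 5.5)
Your proposal is correct and follows the paper's proof essentially verbatim: expand each centered sample in the orthonormal basis $\{u_i\}$, set $\alpha_i^{(t)} = u_i^\top(p^{(t)}-\bar{c})/\rho_i$ (or $0$ when $\rho_i=0$), and conclude $p^{(t)} = \bar{c} + G_{\mathrm{fit}}\alpha^{(t)}$ with $\|\alpha^{(t)}\|_\infty \le 1$. Your justification that the $\rho_i=0$ term vanishes, and your note that the full rather than the thin SVD is needed, spell out two points the paper leaves implicit.
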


\begin{proof}
Since $U_r=[u_1,\dots,u_{n_y}]$ is an orthonormal basis of $\R^{n_y}$, each centered sample admits the expansion
$
p^{(t)}-\bar{c} = \sum_{i=1}^{n_y} \bigl(u_i^\top (p^{(t)}-\bar{c})\bigr) u_i.
$
For each $t$ and $i$, define
$
\alpha_i^{(t)} =
\begin{cases}
\dfrac{u_i^\top (p^{(t)}-\bar{c})}{\rho_i}, & \rho_i > 0,\\[1.2ex]
0, & \rho_i = 0.
\end{cases}
$
By the definition of $\rho_i$ in~\eqref{eq:pca-radius}, one has $|\alpha_i^{(t)}|\le 1$ for all $i=1,\dots,n_y$ and all $t=1,\dots,M$. Hence
$
p^{(t)}-\bar{c}
= \sum_{i=1}^{n_y} \alpha_i^{(t)} \rho_i u_i
= G_{\mathrm{fit}} \alpha^{(t)},
$
where $\alpha^{(t)} = [\alpha_1^{(t)},\dots,\alpha_{n_y}^{(t)}]^\top$ satisfies
$\|\alpha^{(t)}\|_\infty \le 1$. Therefore,
$
p^{(t)} \in \zono{\bar{c},\,G_{\mathrm{fit}}} = \mathcal{Z}_{\mathrm{fit}}
$
for every $t=1,\dots,M$.
\end{proof}

% \begin{lemma}[PCA zonotope fitting]\label{lm:pca-fit}
% Let $\{p^{(t)}\}_{t=1}^M \subset \R^{n_y}$ be a finite point set with sample mean $\bar{c} = \frac{1}{M}\sum_{t=1}^M p^{(t)}$. Let $V \Sigma U^\top$ be the singular value decomposition of the centered data matrix $P = [p^{(1)}{-}\bar{c}, \dots, p^{(M)}{-}\bar{c}]^\top \in \R^{M \times n_y}$, and let $v_1, \dots, v_{n_y}$ denote the right singular vectors (columns of $U$). Define the zonotope $\mathcal{Z}_{\mathrm{fit}} = \zono{\bar{c}, G_{\mathrm{fit}}}$ with generators
% \begin{equation}\label{eq:pca-gen}
%     g_i = v_i \cdot \max_{1 \leq t \leq M} |v_i^\top (p^{(t)} - \bar{c})|, \quad i = 1, \dots, K_g,
% \end{equation}
% where $K_g \leq n_y$. Then $\{p^{(t)}\}_{t=1}^M \subset \mathcal{Z}_{\mathrm{fit}}$.
% \end{lemma}

% \begin{proof}
% For any $p^{(t)}$, writing $p^{(t)} - \bar{c} = \sum_{i=1}^{n_y} (v_i^\top (p^{(t)} - \bar{c})) \, v_i$ and defining $\alpha_i^{(t)} = v_i^\top (p^{(t)} - \bar{c}) / \max_{s} |v_i^\top (p^{(s)} - \bar{c})|$, we have $|\alpha_i^{(t)}| \leq 1$ and $p^{(t)} = \bar{c} + \sum_{i=1}^{K_g} \alpha_i^{(t)} g_i + r^{(t)}$, where $r^{(t)}$ lies in the subspace spanned by $v_{K_g+1}, \dots, v_{n_y}$. When $K_g = n_y$, the residual vanishes and $p^{(t)} = \bar{c} + G_{\mathrm{fit}} \alpha^{(t)}$ with $\|\alpha^{(t)}\|_\infty \leq 1$, hence $p^{(t)} \in \mathcal{Z}_{\mathrm{fit}}$.
% \end{proof}

For the first $n_o$ steps, Lemma~\ref{lm:pca-fit} is applied with $p^{(t)} = y_{(k)}^{(t)}$ to obtain the initial context zonotopes.

\subsection{Certificate-Assisted Label Refinement}\label{sec:tightening}

The data-driven sets from Proposition~\ref{prop:reach} are guaranteed outer approximations but may be excessively conservative. When exterior information about the true reachable set is available during training (e.g., from a model-based oracle), it can be exploited to tighten the training labels. This section describes a directional contraction procedure that produces zonotopes $\mathcal{T}_k \subseteq \hat{\mathcal{Y}}_k^{\mathrm{DD}}$ as Transformer training targets.

\begin{assumption}[Directional exterior certificate]\label{ass:directional-certificate}
For each step $k \geq n_o$ and each queried output point
$y \in \mathbb{R}^{n_y}$, there exists a binary routine
$
\mathrm{OutsideCert}_k(y) \in \{0,1\}
$
such that
$
\mathrm{OutsideCert}_k(y)=1
\;\Longrightarrow\;
y \notin \mathcal{Y}_k.
$
That is, whenever the routine returns $1$, the queried point is certified
to lie outside the true reachable set.
The routine may be conservative in the sense that
$\mathrm{OutsideCert}_k(y)=0$ does not necessarily imply
$y \in \mathcal{Y}_k$.
Its implementation is application-dependent and is not part of the
proposed method.
\end{assumption}

Let $\hat{\mathcal{Y}}_k^{\mathrm{DD}} = \zono{c_k^{\mathrm{DD}}, G_k^{\mathrm{DD}}}$ with generators $G_k^{\mathrm{DD}} = [g_{k,1}^{\mathrm{DD}},\dots,g_{k,\gamma_k}^{\mathrm{DD}}]$. For each nonzero generator, define the unit direction $q_{k,j} = g_{k,j}^{\mathrm{DD}} / \|g_{k,j}^{\mathrm{DD}}\|_2$ and the half-width $\rho_{k,j}^{\mathrm{DD}} = \sum_{\ell=1}^{\gamma_k} |q_{k,j}^\top g_{k,\ell}^{\mathrm{DD}}|$.

To tighten $\hat{\mathcal{Y}}_k^{\mathrm{DD}}$, we query the certificate along the two rays $\pm q_{k,j}$ from $c_k^{\mathrm{DD}}$. For each $j=1,\dots,\gamma_k$, let
\begin{align*}
r_{k,j}^{\pm} = \inf\!\Big\{ r \in [0,\rho_{k,j}^{\mathrm{DD}}] : \mathrm{OutsideCert}_k(c_k^{\mathrm{DD}} \pm r\, q_{k,j}) = 1 \Big\},
\end{align*}
with the convention that the infimum equals $\rho_{k,j}^{\mathrm{DD}}$ if the set is empty.
We then define the certified admissible radius
\begin{equation}\label{eq:rho-certified}
\rho_{k,j}^{\mathrm{cert}}
=
\min\{r_{k,j}^{+},\,r_{k,j}^{-}\},
\qquad j=1,\dots,\gamma_k,
\end{equation}
and the associated contraction factors
\begin{equation}\label{eq:lambda-tight}
\lambda_{k,j}
=
\frac{\rho_{k,j}^{\mathrm{cert}}}{\rho_{k,j}^{\mathrm{DD}}},
\qquad j=1,\dots,\gamma_k.
\end{equation}
Since $0 \le \rho_{k,j}^{\mathrm{cert}} \le \rho_{k,j}^{\mathrm{DD}}$,
we have $0 \le \lambda_{k,j} \le 1$.
The tightened zonotope is defined as
\begin{equation}\label{eq:tight-safe}
\mathcal{T}_k = \zono{c_k^{\mathrm{DD}},\;[\lambda_{k,1} g_{k,1}^{\mathrm{DD}},\;\dots,\;\lambda_{k,\gamma_k} g_{k,\gamma_k}^{\mathrm{DD}}]}.
\end{equation}

\begin{theorem}[Safe generator-contraction tightening]\label{thm:tight}
Let $\hat{\mathcal{Y}}_k^{\mathrm{DD}}=\zono{c_k^{\mathrm{DD}},G_k^{\mathrm{DD}}}$
be the data-driven reachable set from Proposition~\ref{prop:reach},
and let $\mathcal{T}_k$ be defined by~\eqref{eq:rho-certified}--\eqref{eq:tight-safe}.
Then
$
\mathcal{T}_k \subseteq \hat{\mathcal{Y}}_k^{\mathrm{DD}}.
$
\end{theorem}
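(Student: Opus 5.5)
The containment follows directly from the zonotope parametrization, since $\mathcal{T}_k$ keeps the center $c_k^{\mathrm{DD}}$ and only rescales each generator of $\hat{\mathcal{Y}}_k^{\mathrm{DD}}$ by a factor in $[0,1]$. I will take an arbitrary point of $\mathcal{T}_k$ and exhibit coefficients showing it belongs to $\hat{\mathcal{Y}}_k^{\mathrm{DD}}$. The certificate of Assumption~\ref{ass:directional-certificate} is not needed for this containment; it only governs how small the factors are, and it matters for the separate question of whether $\mathcal{T}_k$ still encloses $\mathcal{Y}_k$.

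First I record that $0\le\lambda_{k,j}\le 1$ for every $j$. Each infimum $r_{k,j}^{\pm}$ is taken over a subset of $[0,\rho_{k,j}^{\mathrm{DD}}]$, or equals $\rho_{k,j}^{\mathrm{DD}}$ by convention when that subset is empty. Hence $0\le\rho_{k,j}^{\mathrm{cert}}\le\rho_{k,j}^{\mathrm{DD}}$. For a nonzero generator, $\rho_{k,j}^{\mathrm{DD}}\ge |q_{k,j}^\top g_{k,j}^{\mathrm{DD}}| = \|g_{k,j}^{\mathrm{DD}}\|_2>0$, so the ratio in~\eqref{eq:lambda-tight} is well defined and lies in $[0,1]$. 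For a zero generator the directions $q_{k,j}$ are undefined. I will fix the convention $\lambda_{k,j}=1$, or any value in $[0,1]$, in that case. This is harmless because $\lambda_{k,j} g_{k,j}^{\mathrm{DD}}=0$ regardless of the choice. Checking this degenerate case is the only real subtlety in the proof.

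Next, let $y\in\mathcal{T}_k$. By Definition~\ref{def:zonotopes},
\[
y=c_k^{\mathrm{DD}}+\sum_{j=1}^{\gamma_k}\alpha^{(j)}\lambda_{k,j}g_{k,j}^{\mathrm{DD}}
\]
with $\alpha^{(j)}\in[-1,1]$. Set $\beta^{(j)}=\lambda_{k,j}\alpha^{(j)}$. Then $|\beta^{(j)}|\le|\alpha^{(j)}|\le 1$, and
\[
y=c_k^{\mathrm{DD}}+\sum_{j=1}^{\gamma_k}\beta^{(j)}g_{k,j}^{\mathrm{DD}},
\]
which is exactly the defining representation of a point of $\hat{\mathcal{Y}}_k^{\mathrm{DD}}=\zono{c_k^{\mathrm{DD}},G_k^{\mathrm{DD}}}$. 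Since $y\in\mathcal{T}_k$ was arbitrary, $\mathcal{T}_k\subseteq\hat{\mathcal{Y}}_k^{\mathrm{DD}}$. Equivalently, $\mathcal{T}_k$ is the image of the unit box under $G_k^{\mathrm{DD}}\,\mathrm{diag}(\lambda_{k,1},\dots,\lambda_{k,\gamma_k})$, and the diagonal factor maps the unit box into itself.
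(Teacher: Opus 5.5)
Your proof is correct and matches the paper's argument: you rescale the coefficients via $\beta_j=\lambda_{k,j}\alpha_j$ and use $0\le\lambda_{k,j}\le 1$ to show the point lies in $\hat{\mathcal{Y}}_k^{\mathrm{DD}}$. Your check that $\rho_{k,j}^{\mathrm{DD}}\ge\|g_{k,j}^{\mathrm{DD}}\|_2>0$ for nonzero generators, and your convention for zero generators, fill in details the paper leaves implicit.
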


\begin{proof}
For any $y \in \mathcal{T}_k$, there exists $\alpha \in [-1,1]^{\gamma_k}$ with $y = c_k^{\mathrm{DD}} + \sum_{j=1}^{\gamma_k} \alpha_j \lambda_{k,j} g_{k,j}^{\mathrm{DD}}$. Setting $\beta_j = \alpha_j \lambda_{k,j}$ and using $0 \le \lambda_{k,j} \le 1$, $|\alpha_j| \le 1$, we get $|\beta_j| \le 1$, so $y = c_k^{\mathrm{DD}} + \sum_j \beta_j g_{k,j}^{\mathrm{DD}} \in \hat{\mathcal{Y}}_k^{\mathrm{DD}}$.
\end{proof}

% \begin{remark}[Scope and role of the exterior certificate]\label{rem:tight}
% Three aspects of the certificate-assisted label refinement deserve emphasis.

% First, $\mathrm{OutsideCert}_k$ is a training-time resource only. It is used offline to produce the tightened labels $\mathcal{T}_k$ that the Transformer is trained on. At deployment, the Transformer and the conformal calibration layer operate without any access to the certificate.

% Second, Theorem~\ref{thm:tight} guarantees $\mathcal{T}_k \subseteq \hat{\mathcal{Y}}_k^{\mathrm{DD}}$ but does not guarantee $\mathcal{Y}_k \subseteq \mathcal{T}_k$. The tightened labels are not claimed to be outer approximations of the true reachable set. Their purpose is to provide the Transformer with training targets closer to the true reachable set than the conservative formal envelope.

% Third, the framework yields a three-layer guarantee hierarchy: (i) the formal data-driven outer approximation from Proposition~\ref{prop:reach}, which holds unconditionally; (ii) the certificate-assisted label refinement, which improves training quality when exterior information is available; and (iii) conformal calibration (Theorem~\ref{thm:coverage}), which restores a distribution-free trajectory-point coverage guarantee at deployment using only observable data.
% \hfill $\lrcorner$
% \end{remark}

\begin{remark}[Role of the exterior certificate]\label{rem:tight}
The certificate is used offline only to produce tighter training labels; Theorem~\ref{thm:tight} guarantees $\mathcal{T}_k \subseteq \hat{\mathcal{Y}}_k^{\mathrm{DD}}$ but not $\mathcal{Y}_k \subseteq \mathcal{T}_k$. If the certificate is imperfect, conformal calibration in Theorem~\ref{thm:coverage} compensates at deployment, as no test-time guarantee depends on it. Without tightening, the only available labels are the conservative sets from Algorithm~\ref{alg:dd-reach}, on which a surrogate cannot improve, rendering the Transformer redundant.
\hfill $\lrcorner$
\end{remark}

\subsection{Zonotope Tokenization and Training Data}

Each tightened zonotope $\mathcal{T}_k = \zono{c_k, G_k}$, after reduction to $K_g = \rho_{\max} n_y$ generators, is tokenized as $1 + K_g$ tokens in $\R^{n_y + 1}$: the first encodes the center and the rest encode the generators, each augmented with a normalized time coordinate $t_k / T_{\max}$. Training segments consist of $n_o$ consecutive tightened zonotopes as context and the next zonotope as target, learning $[\mathcal{T}_{k-n_o}, \dots, \mathcal{T}_{k-1}] \to \mathcal{T}_k$. For each of $N_s$ samples, the initial condition is randomly perturbed; tightened labels are produced offline via Section~\ref{sec:tightening}.

\subsection{Zonotope Sequence Prediction}

Reachable-set propagation is inherently causal: the output set $\hat{\mathcal{Y}}_k$ depends only on the preceding sets $\hat{\mathcal{Y}}_{k-1}, \dots, \hat{\mathcal{Y}}_{k-n_o}$ and the corresponding inputs. This causal structure motivates a decoder-only Transformer architecture~\cite{vaswani2017attention}, in which each token attends exclusively to its predecessors through a causal self-attention mask, naturally reflecting the sequential dependency of the propagation.

Since each zonotope $\mathcal{T}_k = \zono{c_k, G_k}$ with $K_g$ generators is decomposed into $1 + K_g$ tokens (one center token followed by $K_g$ generator tokens, each in $\R^{n_y+1}$), a context window of $n_o$ consecutive zonotopes yields a prompt of $n_o(1 + K_g)$ tokens. Each token is projected to the model dimension $d$ through a linear embedding. A stack of $L$ Transformer layers with causal self-attention processes the prompt, and a learned query sequence of $1 + K_g$ tokens is appended to produce the predicted zonotope $\hat{\mathcal{Y}}_k$; the output head projects the final hidden states back to $\R^{n_y}$, reconstructing the center and generators. The training objective minimizes the mean squared error between the predicted and target zonotope tokens, treating center and generator tokens equally:
\[
\mathcal{L} = \frac{1}{1+K_g} \sum_{j=0}^{K_g} \|\hat{t}_j - t_j\|_2^2,
\]
where $t_0$ is the target center token and $t_1, \dots, t_{K_g}$ are the target generator tokens.

At inference, the Transformer is applied autoregressively: the predicted zonotope replaces the oldest context element, and the procedure repeats for each subsequent step. Although the data-driven propagation of Algorithm~\ref{alg:dd-reach} suffers from the conservatism discussed in Remark~\ref{rem:complexity}, the zonotope centers are more trustworthy than the generators, because center propagation uses only the nominal parameter matrix and does not accumulate the generator-level parametric uncertainty. We therefore initialize the autoregressive context with the data-driven centers, providing a reliable starting point for the Transformer, which then predicts tighter generator bounds through its learned representation. By mapping a fixed-length context directly to the next output set, the Transformer sidesteps the structural conservatism of matrix-zonotope multiplication entirely.

\subsection{Conformal Calibration}\label{sec:conformal}

The formal containment guarantee established in Proposition~\ref{prop:reach}
applies to the data-driven reachable sets produced by
Algorithm~\ref{alg:dd-reach}, but not to the Transformer predictions.
To restore a distribution-free coverage guarantee for the learned surrogate,
we apply split conformal prediction using realized output trajectories as
calibration targets. Importantly, the calibration procedure relies only on
observable outputs and the predicted zonotopes, and does not require access
to the true reachable-set hull.

For each calibration trial $i = 1, \dots, n_{\mathrm{cal}}$, a trajectory is
generated from the system and the trained Transformer is run autoregressively
to produce a predicted zonotope at step $k$,
\[
\hat{\mathcal Y}_i^{(k)}
=
\langle \hat c_i^{(k)}, \hat G_i^{(k)} \rangle
=
\left\{
\hat c_i^{(k)} + \hat G_i^{(k)} \beta :
\|\beta\|_\infty \le 1
\right\},
\]
where $\hat c_i^{(k)} \in \mathbb R^{n_y}$ is the zonotope center and
$\hat G_i^{(k)} \in \mathbb R^{n_y \times K_g}$ is the generator matrix.

To conformally calibrate the zonotopic prediction itself, we define the
nonconformity score as the minimum $\ell_\infty$ inflation radius required
for the realized output $y_i^{(k)}$ to belong to the predicted zonotope:
\begin{equation}\label{eq:score}
s_i^{(k)}
=
\inf\Big\{
r \ge 0 :
y_i^{(k)} \in
\hat{\mathcal Y}_i^{(k)} \oplus \zono{0, r I_{n_y}}
\Big\}.
\end{equation}
Equivalently, since
\[
\begin{gathered}
y_i^{(k)} \in
\hat{\mathcal Y}_i^{(k)} \oplus \zono{0, r I_{n_y}} \\
\Updownarrow \\
\exists \beta \in [-1,1]^{K_g}
\ \text{s.t.}\
\|y_i^{(k)}-\hat c_i^{(k)}-\hat G_i^{(k)}\beta\|_\infty \le r .
\end{gathered}
\]
the score admits the optimization form
\begin{equation}\label{eq:score_opt}
s_i^{(k)}
=
\min_{\|\beta\|_\infty \le 1}
\left\|
y_i^{(k)} - \hat c_i^{(k)} - \hat G_i^{(k)} \beta
\right\|_\infty.
\end{equation}
Thus, $s_i^{(k)}$ is precisely the smallest axis-aligned inflation radius
that makes the realized output compatible with the predicted zonotope.

In practice, \eqref{eq:score_opt} can be evaluated as the linear program
\begin{equation}\label{eq:score_lp}
\min_{\beta,t}\; t
\quad\text{s.t.}\quad
\|y_i^{(k)} {-} \hat c_i^{(k)} {-} \hat G_i^{(k)}\beta\|_\infty \le t,\;
\|\beta\|_\infty \le 1.
\end{equation}
Since this optimization is solved only for the offline calibration samples,
it does not affect the online inference cost.

For each horizon step $k$, the conformal quantile is defined by
\begin{equation}\label{eq:qhat}
\hat q^{(k)}
=
\mathrm{Quantile}\bigg(
\{s_i^{(k)}\}_{i=1}^{n_{\mathrm{cal}}},
\frac{\lceil (n_{\mathrm{cal}} + 1)(1-\delta)\rceil}{n_{\mathrm{cal}}}
\bigg).
\end{equation}
At inference time, the conformally calibrated prediction set is then
\begin{equation}\label{eq:inflate}
\hat{\mathcal Y}_k^{(\delta)}
=
\hat{\mathcal Y}_k \oplus \zono{0, \hat q^{(k)} I_{n_y}}.
\end{equation}
For notational simplicity, for the held-out test trajectory we write
$
\hat{\mathcal Y}_k = \hat{\mathcal Y}_{n_{\mathrm{cal}}+1}^{(k)}.
$

The score in~\eqref{eq:score} is computed only during offline calibration.
At test time, one only needs the precomputed quantile $\hat q^{(k)}$ and the
Transformer-predicted zonotope $\hat{\mathcal Y}_k$, so the online correction
amounts to a simple uniform inflation of the prediction by an axis-aligned
box zonotope.

\begin{theorem}[Trajectory-point coverage]\label{thm:coverage}
Let the calibration trajectories
$\{y_i^{(k)}\}_{i=1}^{n_{\mathrm{cal}}}$ and the test trajectory
$y_{n_{\mathrm{cal}}+1}^{(k)}$ be exchangeable. Then the conformally
calibrated set~\eqref{eq:inflate} satisfies
\begin{equation}
\Pr\!\left[
y_{n_{\mathrm{cal}}+1}^{(k)} \in \hat{\mathcal Y}_k^{(\delta)}
\right]
\ge 1-\delta.
\end{equation}
\end{theorem}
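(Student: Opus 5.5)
The plan is to reduce the claim to the standard split-conformal rank argument, applied to the scalar scores~\eqref{eq:score} at a fixed step $k$.

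\textbf{Step 1: reformulate the event.} By the definition~\eqref{eq:score} as an infimum, together with the optimization form~\eqref{eq:score_opt}, the infimum is attained. The minimization in $\beta$ is over a compact set and the objective is continuous. Hence, for any $r\ge 0$,
\[
y_{n_{\mathrm{cal}}+1}^{(k)} \in \hat{\mathcal Y}_k \oplus \zono{0, r I_{n_y}} \iff s_{n_{\mathrm{cal}}+1}^{(k)} \le r .
\]
Taking $r=\hat q^{(k)}$ shows that the coverage event is exactly $\{s_{n_{\mathrm{cal}}+1}^{(k)} \le \hat q^{(k)}\}$. Closedness matters here: a zonotope plus a box is a compact set, so the boundary case $s=r$ is included.

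\textbf{Step 2: transfer exchangeability to the scores.} Each score $s_i^{(k)}$ is obtained by applying one fixed measurable map to the $i$-th trial, namely the output $y_i^{(k)}$ together with the Transformer prediction $\hat{\mathcal Y}_i^{(k)}$ generated from that trial's context. This requires that the Transformer is trained on data disjoint from the calibration and test trials, so that it is fixed once we condition on the training set. I will interpret the theorem's hypothesis as exchangeability of the full trials, each consisting of the pair of prediction and output, since the predicted zonotope depends on the trial. I expect this to be the main obstacle: stating precisely what is exchangeable and why a common fixed score map preserves exchangeability. Once that is settled, the scores $s_1^{(k)},\dots,s_{n_{\mathrm{cal}}+1}^{(k)}$ are exchangeable.

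\textbf{Step 3: the rank argument.} Let $m=\lceil (n_{\mathrm{cal}}+1)(1-\delta)\rceil$. If $m>n_{\mathrm{cal}}$, the quantile is $+\infty$ by convention and the claim is trivial, so assume $m\le n_{\mathrm{cal}}$. Then $\hat q^{(k)}$ is the $m$-th smallest calibration score. The event $s_{n_{\mathrm{cal}}+1}^{(k)} > \hat q^{(k)}$ implies that the test score is strictly larger than at least $m$ of the scores, so its rank among all $n_{\mathrm{cal}}+1$ scores exceeds $m$. By exchangeability, the rank of the test score is stochastically no larger than the uniform distribution on $\{1,\dots,n_{\mathrm{cal}}+1\}$; ties are handled by random tie-breaking or by the inequality direction. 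It follows that
\[
\Pr\bigl[s_{n_{\mathrm{cal}}+1}^{(k)} \le \hat q^{(k)}\bigr] \ge \frac{m}{n_{\mathrm{cal}}+1} \ge 1-\delta .
\]
Combining this with Step 1 gives the stated bound. Equivalently, the bound follows from the conformal guarantee recalled in the preliminaries after~\eqref{eq:conformal}, applied with the score map of~\eqref{eq:score}.
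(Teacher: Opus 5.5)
Your proposal is correct and follows the paper's own route: exchangeability of the step-$k$ scores, the split-conformal bound $\Pr\bigl[s_{n_{\mathrm{cal}}+1}^{(k)} \le \hat q^{(k)}\bigr] \ge 1-\delta$, and the equivalence of this event with membership in the inflated zonotope. The differences are only in detail: you prove the rank bound directly where the paper simply cites the standard split-conformal theorem, and you spell out two points the paper leaves implicit, namely that the infimum in~\eqref{eq:score} is attained and that exchangeability is needed for the (prediction, output) pairs rather than the outputs alone.
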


\begin{proof}
For a fixed prediction step $k$, each score $s_i^{(k)}$ is computed from the
pair consisting of the realized output and the corresponding Transformer
prediction for trajectory $i$. Under the assumed exchangeability of the
calibration trajectories and the test trajectory, the scores
$\{s_i^{(k)}\}_{i=1}^{n_{\mathrm{cal}}+1}$ are therefore exchangeable.
By the standard split conformal guarantee
\cite[Theorem~2.2]{vovk2005algorithmic},
$
\Pr\!\left[
s_{n_{\mathrm{cal}}+1}^{(k)} \le \hat q^{(k)}
\right]
\ge 1-\delta.
$

By the definition of the score in~\eqref{eq:score},
\[
s_{n_{\mathrm{cal}}+1}^{(k)} \le \hat q^{(k)}
\quad \Longleftrightarrow \quad
y_{n_{\mathrm{cal}}+1}^{(k)}
\in
\hat{\mathcal Y}_{n_{\mathrm{cal}}+1}^{(k)}
\oplus
\zono{0,\hat q^{(k)} I_{n_y}}.
\]
Using the notation
$\hat{\mathcal Y}_k = \hat{\mathcal Y}_{n_{\mathrm{cal}}+1}^{(k)}$,
the right-hand side is exactly
$
y_{n_{\mathrm{cal}}+1}^{(k)} \in \hat{\mathcal Y}_k^{(\delta)}.
$
Hence,
$
\Pr\!\left[
y_{n_{\mathrm{cal}}+1}^{(k)} \in \hat{\mathcal Y}_k^{(\delta)}
\right]
\ge 1-\delta,
$
which proves the claim.
\end{proof}

\begin{corollary}[Trajectory-level coverage]\label{cor:joint}
Define the trajectory-level nonconformity score
$\bar{s}_i = \max_{k} s_i^{(k)}$
and the corresponding quantile
\begin{equation}\label{eq:qbar}
\bar{q} = \mathrm{Quantile}\!\left(
\{\bar{s}_i\}_{i=1}^{n_{\mathrm{cal}}},\;
\frac{\lceil (n_{\mathrm{cal}} + 1)(1-\delta)\rceil}{n_{\mathrm{cal}}}
\right).
\end{equation}
Under the same exchangeability assumption as Theorem~\ref{thm:coverage}, the uniformly inflated sets
$\hat{\mathcal Y}_k^{(\delta,\mathrm{joint})} = \hat{\mathcal Y}_k \oplus \zono{0, \bar{q}\, I_{n_y}}$
satisfy
\begin{equation}
\Pr\!\left[
\forall\, k:\;
y_{n_{\mathrm{cal}}+1}^{(k)} \in \hat{\mathcal Y}_k^{(\delta,\mathrm{joint})}
\right]
\ge 1-\delta.
\end{equation}
\end{corollary}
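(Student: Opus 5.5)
The plan is to reduce the joint statement to a single scalar split-conformal guarantee, exactly as in the proof of Theorem~\ref{thm:coverage}, but now applied to the trajectory-level score $\bar s_i = \max_k s_i^{(k)}$.

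\textbf{Step 1 (exchangeability of the scores).} Each $\bar s_i$ is obtained by applying the same deterministic map to trajectory $i$. This map runs the fixed, already trained Transformer autoregressively, solves the LPs~\eqref{eq:score_lp} for every horizon step $k$, and takes the maximum over the finite horizon. Since the Transformer and this map do not depend on the calibration data, exchangeability of the $n_{\mathrm{cal}}+1$ trajectories (calibration plus test) transfers to the scalars $\{\bar s_i\}_{i=1}^{n_{\mathrm{cal}}+1}$.

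\textbf{Step 2 (scalar conformal guarantee).} Invoke the standard split conformal result \cite[Theorem~2.2]{vovk2005algorithmic} with quantile level $\lceil (n_{\mathrm{cal}}+1)(1-\delta)\rceil/n_{\mathrm{cal}}$. This gives
\[
\Pr[\bar s_{n_{\mathrm{cal}}+1}\le \bar q]\ge 1-\delta .
\]
If ties among scores are a concern, I will argue by the usual rank argument: the probability that the test score exceeds the $\lceil (n_{\mathrm{cal}}+1)(1-\delta)\rceil$-th smallest calibration score is at most $\delta$. This requires only exchangeability, not continuity.

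\textbf{Step 3 (translating back to sets).} Observe that
\[
\bar s_{n_{\mathrm{cal}}+1}\le \bar q \iff s_{n_{\mathrm{cal}}+1}^{(k)}\le \bar q \ \text{for all } k .
\]
By the infimum definition~\eqref{eq:score}, or equivalently~\eqref{eq:score_opt}, $s^{(k)}\le \bar q$ means there exists $\beta\in[-1,1]^{K_g}$ with $\|y^{(k)}-\hat c^{(k)}-\hat G^{(k)}\beta\|_\infty\le \bar q$. This is exactly membership of $y^{(k)}$ in $\hat{\mathcal Y}_k\oplus\zono{0,\bar q I_{n_y}}$. Here the minimum in~\eqref{eq:score_opt} is attained, because it is a continuous function minimized over a compact set, so the infimum is achieved and no strict/non-strict gap arises. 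Hence the event $\{\bar s_{n_{\mathrm{cal}}+1}\le\bar q\}$ coincides with the joint event $\{\forall k: y^{(k)}_{n_{\mathrm{cal}}+1}\in \hat{\mathcal Y}_k^{(\delta,\mathrm{joint})}\}$, and the bound follows.

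The only delicate point is Step 1. The claim must be that the Transformer is trained on data independent of the calibration and test trajectories, and that the per-step predictions for trajectory $i$ depend only on trajectory $i$, for instance through its initial context. I will state this explicitly as part of the exchangeability hypothesis inherited from Theorem~\ref{thm:coverage}. The rest is the same argument as in Theorem~\ref{thm:coverage}, with the max-score replacing the per-step score. That replacement is what avoids a union bound over $k$.
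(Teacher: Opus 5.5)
Your proposal is correct and follows the paper's argument: exchangeability of the trajectory-level scores $\bar s_i$, then the split conformal bound $\Pr[\bar s_{n_{\mathrm{cal}}+1}\le\bar q]\ge 1-\delta$, then the observation that $\bar s_{n_{\mathrm{cal}}+1}\le\bar q$ forces $s^{(k)}_{n_{\mathrm{cal}}+1}\le\bar q$ for every $k$, which gives simultaneous containment without a union bound. Your extra points are sound and only make explicit what the paper leaves implicit: the minimum in~\eqref{eq:score_opt} is attained, the bound holds under ties, and the trained Transformer must be independent of the calibration data.
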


\begin{proof}
By exchangeability of trajectories, the scores
$\{\bar{s}_i\}_{i=1}^{n_{\mathrm{cal}}+1}$ are exchangeable.
The split conformal guarantee gives
$\Pr[\bar{s}_{n_{\mathrm{cal}}+1} \le \bar{q}] \ge 1-\delta$.
Since $\bar{s}_{n_{\mathrm{cal}}+1} \le \bar{q}$ implies
$s_{n_{\mathrm{cal}}+1}^{(k)} \le \bar{q}$ for every $k$,
the containment
$y_{n_{\mathrm{cal}}+1}^{(k)} \in \hat{\mathcal Y}_k^{(\delta,\mathrm{joint})}$
holds simultaneously for all prediction steps.
\end{proof}

\begin{remark}[Scope of guarantees]\label{rem:scope}
Proposition~\ref{prop:reach} is a deterministic outer approximation for all admissible inputs, disturbances, and initial conditions. Theorem~\ref{thm:coverage} and Corollary~\ref{cor:joint} add distribution-free per-step and trajectory-level coverage guarantees, respectively. The deterministic envelope serves as the safety certificate; the conformally calibrated Transformer provides tighter predictions with quantifiable coverage. Exchangeability is preserved by applying the same filtering criterion to calibration and test trajectories.
\hfill $\lrcorner$
\end{remark}

The complete procedure is summarized in Algorithm~\ref{alg:tf-reach}.

\begin{algorithm}[t]
\caption{Transformer-Enhanced Output Reachability}
\label{alg:tf-reach}
\begin{algorithmic}[1]
\Require Trajectories $\{y_{(k)}^{(t)}\}$; trained model $f_\theta$; quantile $\hat{q}$; context length $n_o$
\Ensure Calibrated output sets $\hat{\mathcal{Y}}_0^{(\delta)}, \dots, \hat{\mathcal{Y}}_N^{(\delta)}$
\For{$k = 0, \dots, n_o - 1$}
    \State $\mathcal{T}_k \gets \mathrm{ZonoFit}(\{y_{(k)}^{(t)}\}, K_g)$ 
\EndFor
\For{$k = n_o, \dots, N$}
    \State Context: $\mathbf{x} \gets \mathrm{Tok}(\mathcal{T}_{k-n_o}, \dots, \mathcal{T}_{k-1})$
    \State Predict: $\hat{\mathcal{Y}}_{k} \gets \mathrm{DeTok}(f_\theta(\mathbf{x}))$
    \State Inflate: $\hat{\mathcal{Y}}_{k}^{(\delta)} \gets \hat{\mathcal{Y}}_{k} \oplus \zono{0, \hat{q}^{(k)} I_{n_y}}$ 
    \State $\mathcal{T}_{k} \gets \hat{\mathcal{Y}}_{k}^{(\delta)}$ 
\EndFor
\end{algorithmic}
\end{algorithm}

\section{Numerical Simulations}\label{sec:experiments}

\begin{figure*}[!h]
    \vspace{-1em}
    \centering
    \begin{subfigure}[h]{0.32\textwidth}
        \includegraphics[width=\linewidth]{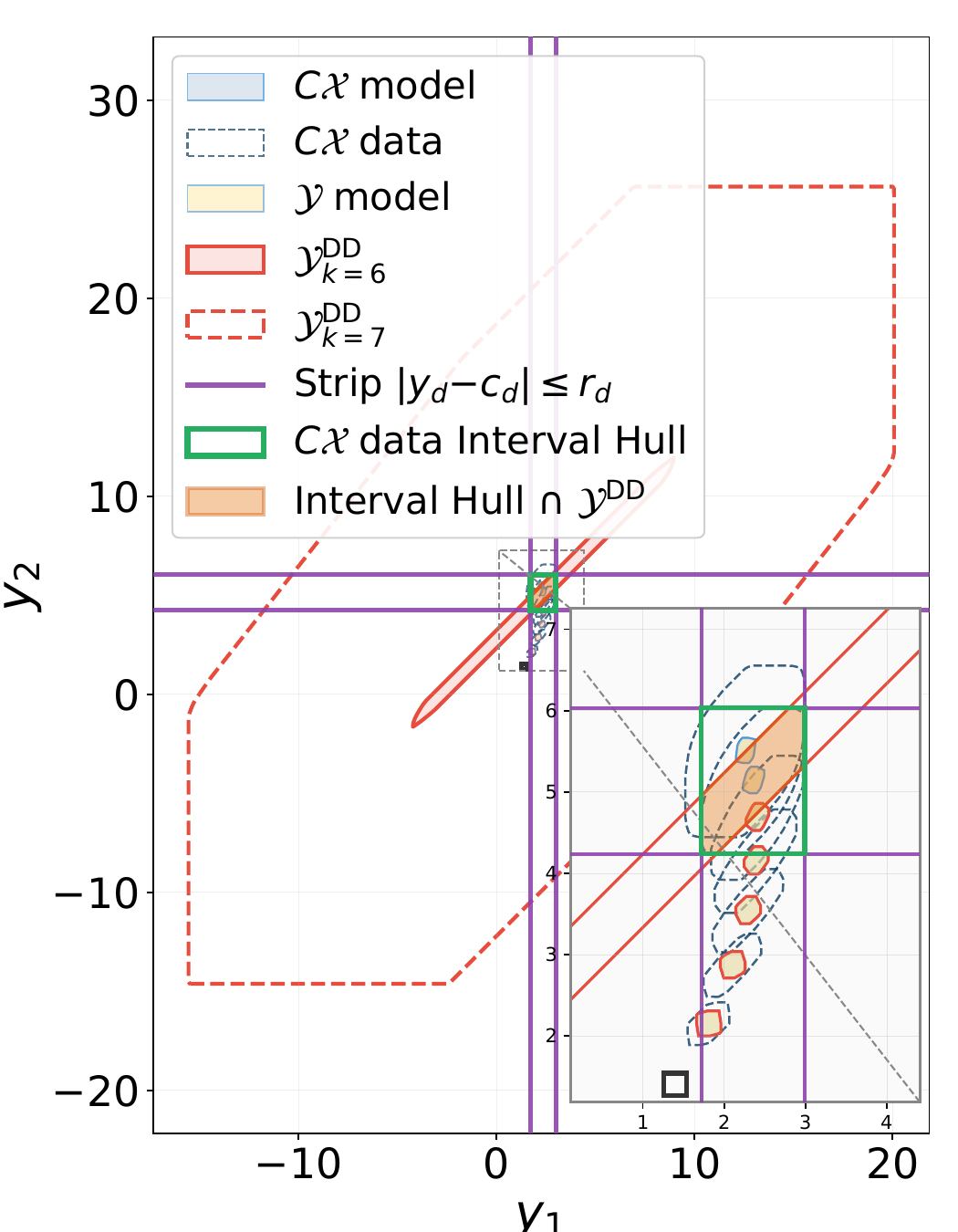}
        \caption{$C_a$: cross-block}
        \label{fig:dd_Ca}
    \end{subfigure}
    \begin{subfigure}[h]{0.32\textwidth}
        \includegraphics[width=\linewidth]{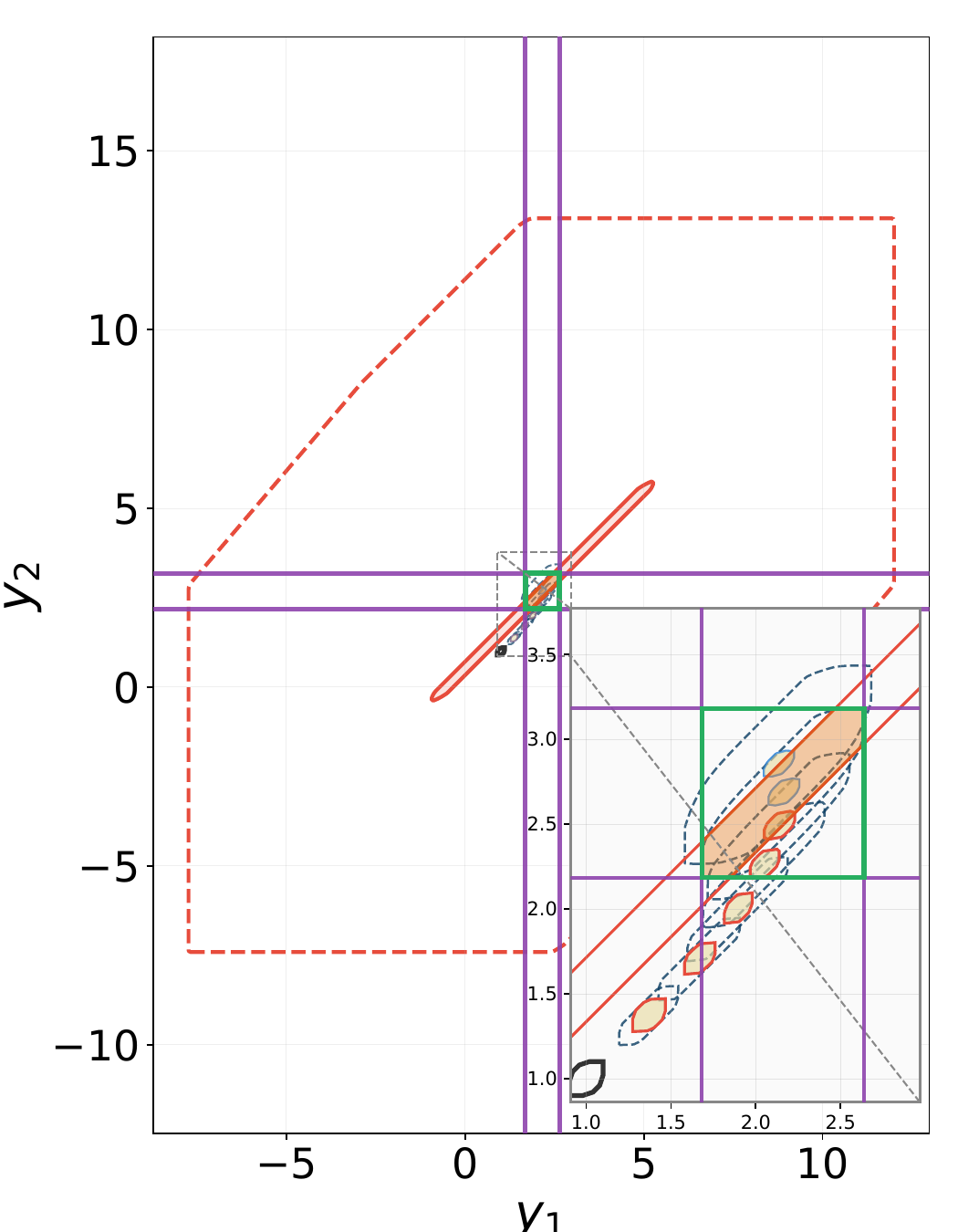}
        \caption{$C_b$: gradient}
        \label{fig:dd_Cb}
    \end{subfigure}
    \begin{subfigure}[h]{0.32\textwidth}
        \includegraphics[width=\linewidth]{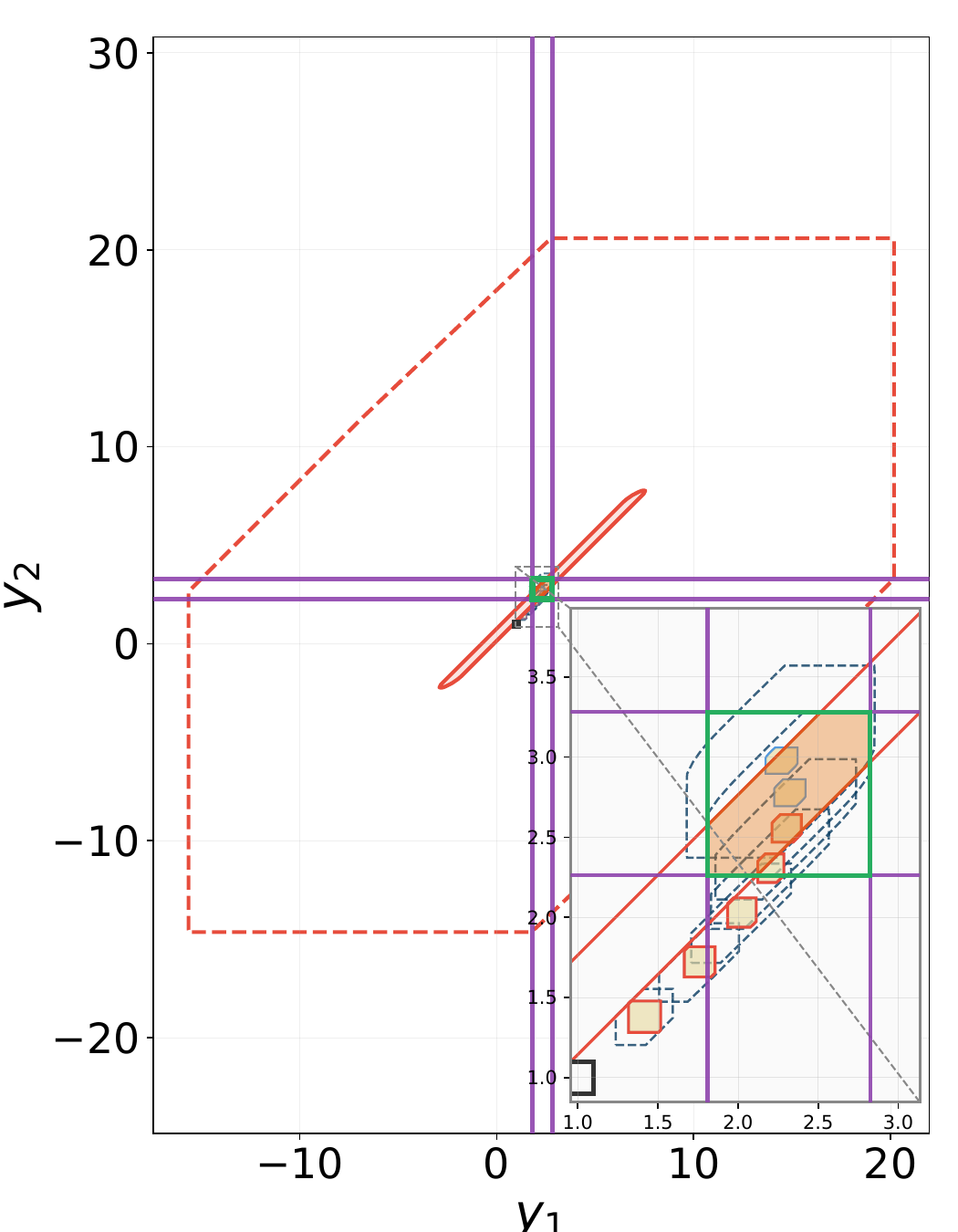}
        \caption{$C_c$: pairwise}
        \label{fig:dd_Cc}
    \end{subfigure}
    \caption{Data-driven output reachable sets under three sensor configurations ($n_y=2$).}
    \label{fig:dd-partial}
\end{figure*}

% ── Figure: TF results ──
\begin{figure*}[!h]
    \centering
    \begin{subfigure}[h]{0.32\textwidth}
        \includegraphics[width=\linewidth]{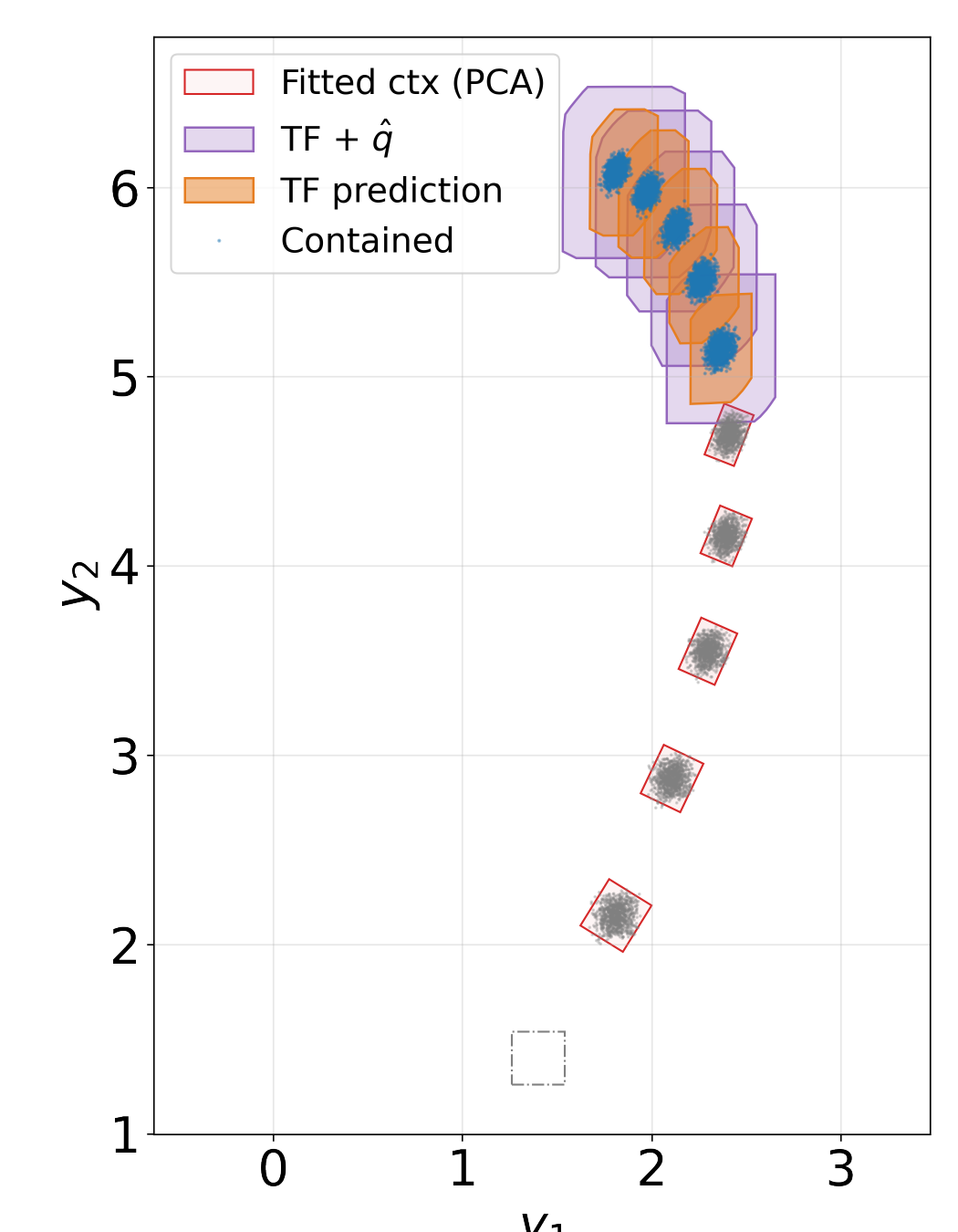}
        \caption{$C_a$: cross-block}
        \label{fig:tf_Ca}
    \end{subfigure}
    \begin{subfigure}[h]{0.32\textwidth}
        \includegraphics[width=\linewidth]{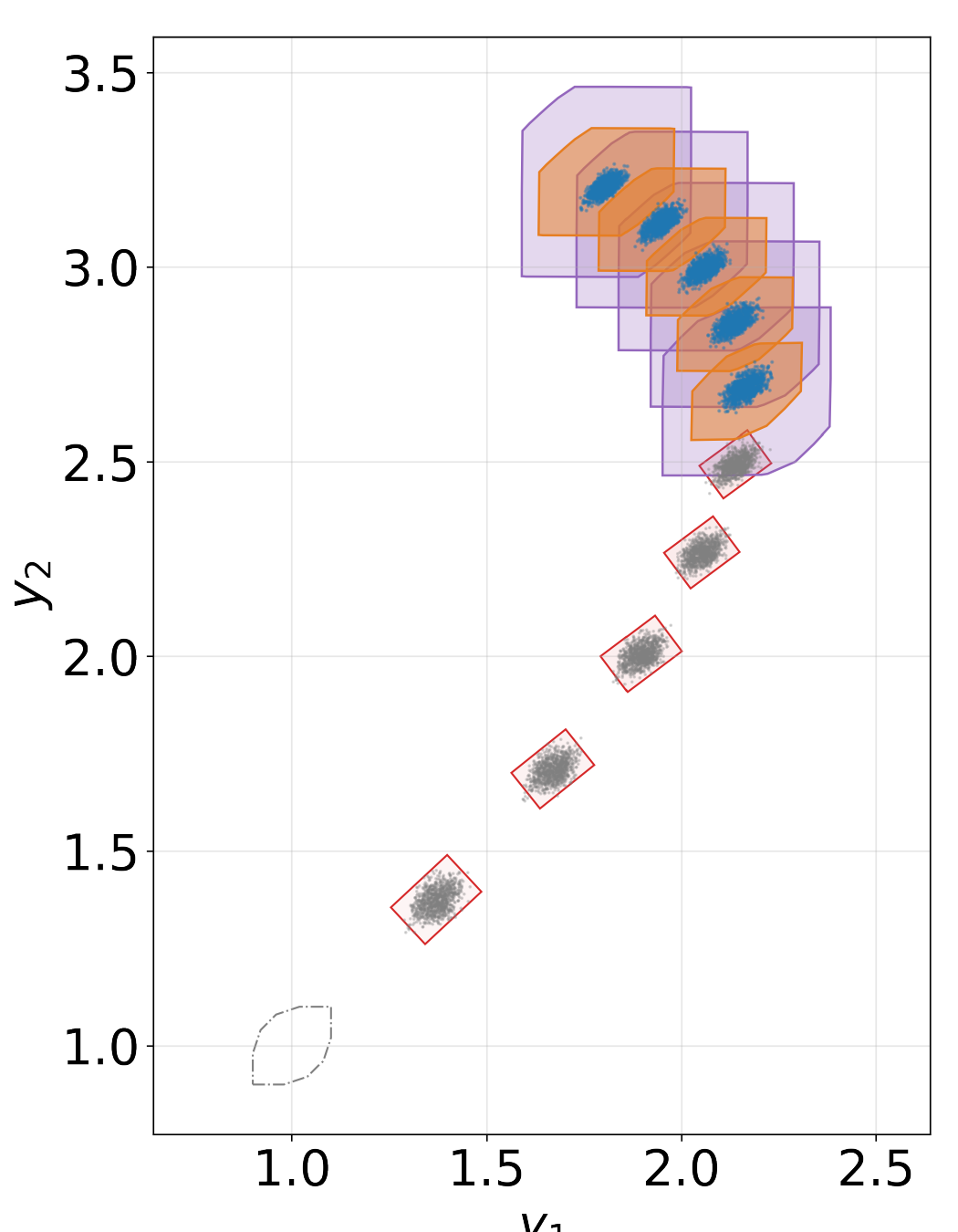}
        \caption{$C_b$: gradient}
        \label{fig:tf_Cb}
    \end{subfigure}
    \begin{subfigure}[h]{0.32\textwidth}
        \includegraphics[width=\linewidth]{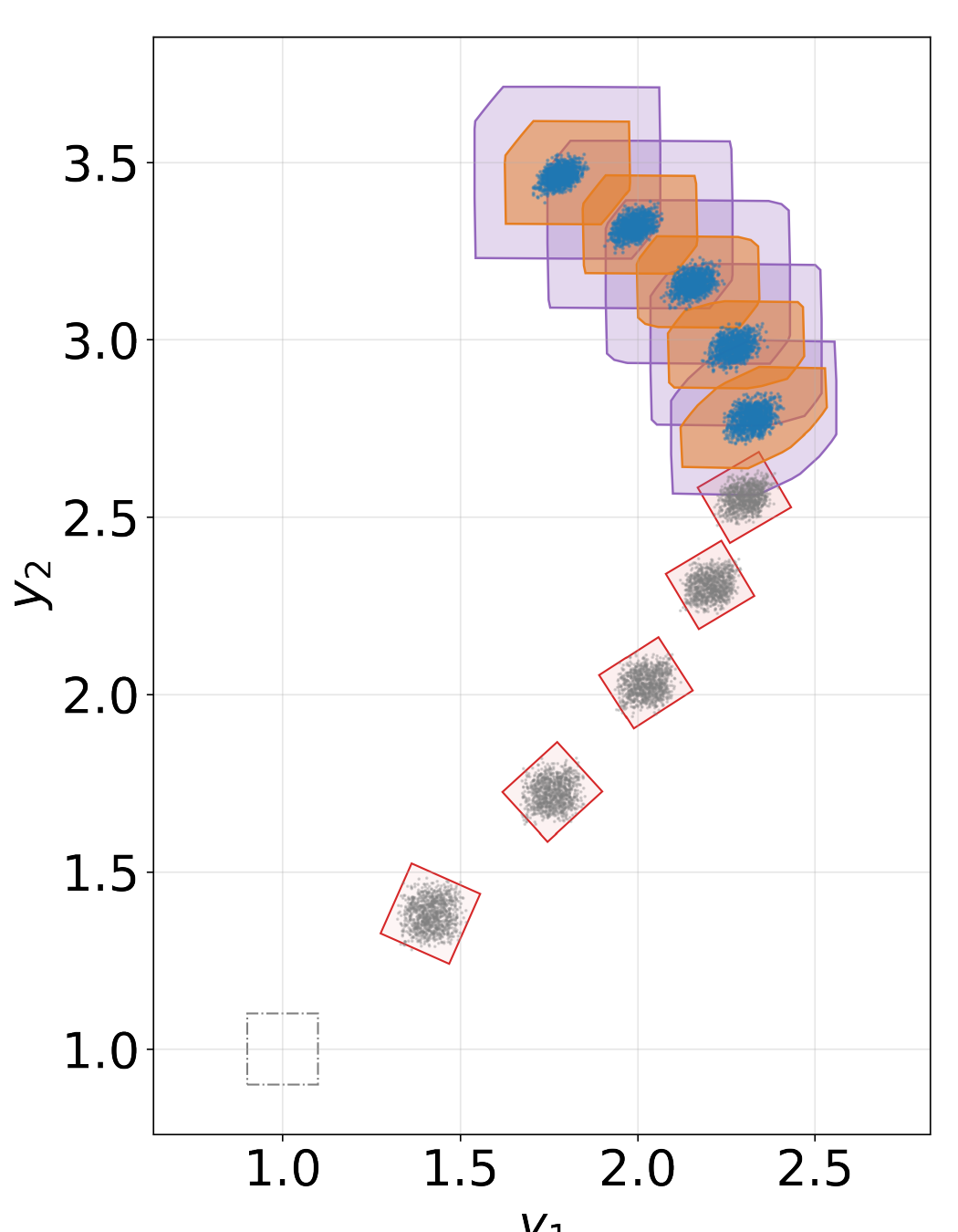}
        \caption{$C_c$: pairwise}
        \label{fig:tf_Cc}
    \end{subfigure}
    \caption{Transformer-enhanced output reachable sets for three sensor configurations.}
    \label{fig:tf-results}
\end{figure*}

We validate the proposed framework on a five-dimensional LTI system~\eqref{eq:ss} with $n_x=5$, $n_u=1$, and sampling time $\Delta t=0.05\,\mathrm{s}$, where $A \in \R^{5\times 5}$ is the oscillatory matrix from~\cite{alanwar2023data} and $B=\mathbf{1}_{5\times 1}$. The system matrices $(A, B, C)$ are unknown to the algorithm and are used solely to generate simulation data and to construct a model-based reference for evaluation. The proposed method receives only noisy input-output measurements $\{(u_{(k)},y_{(k)})\}$, the system order $n_o = n_x$, and the residual bound $\mathcal{Z}_\varepsilon$. The initial state set is $\mathcal{X}_0=\zono{1_{5\times1},0.1I_5}$, the input set is $\mathcal{U}=\zono{10,0.25}$, and the aggregated residual bound is $\mathcal{Z}_\varepsilon=\zono{0,0.006\cdot \mathbf{1}_{n_y\times1}}$, consistent with Assumption~\ref{ass:residual-bound}.

% \subsection{Data-Driven Output Reachability}
The model set $\mzon_\Sigma$ is constructed from $T = 50$ input-output samples using Lemma~\ref{lm:model_set} with system order $n_o = 5$ and reduction order $\rho_{\max} = 200$. To demonstrate that the conservatism of Algorithm~\ref{alg:dd-reach} is structural rather than tied to a specific sensor configuration, we evaluate three non-trivial output matrices of dimension $n_y = 2$, each satisfying the observability rank condition: the cross-block configuration $C_a = [0.6, 0, 0.8, 0, 0;\; 0, 0.8, 0, 0, 0.6]$ fuses measurements from dynamically decoupled subsystems, the gradient configuration $C_b = [0.4, 0.3, 0.2, 0.1, 0;\; 0, 0.1, 0.2, 0.3, 0.4]$ assigns monotonically varying weights, and the pairwise configuration $C_c = [0.5, 0.5, 0, 0, 0;\; 0, 0, 0.3, 0, 0.7]$ averages non-adjacent state pairs. In all three cases, the lifted regressor dimension is $n_o(n_y + n_u) = 15$.

Fig.~\ref{fig:dd-partial} shows the propagated output reachable sets for each sensor configuration. The model-based state reachable set projected through $C$ and the model-based output reachable set $\mathcal{Y}$ model serve as references. The data-driven output sets $\mathcal{Y}^{\mathrm{DD}}_k$ are plotted at steps $k = n_o{+}1$ and $k = n_o{+}2$; the interval hull of the data-driven state projection $C\mathcal{X}$ data is also shown. At the first prediction step, the DD envelope expands by an order of magnitude relative to the model-based reference; at $k = n_o{+}2$ it grows further, confirming that the conservatism is structural and independent of the sensor configuration.

As a concrete instantiation of Assumption~\ref{ass:directional-certificate}, we define the exterior certificate along the coordinate directions $e_d$, $d=1,\dots,n_y$: from historical output trajectories, empirical bounds $c_d \pm r_d$ are estimated for each output dimension, and $\mathrm{OutsideCert}_k(y) = 1$ whenever $|y_d - c_d| > r_d$ for some $d$. Applying the directional contraction of Section~\ref{sec:tightening} with these certificates yields tightened sets that substantially reduce the DD envelope without requiring knowledge of $C$.

% \subsection{Transformer-Enhanced Output Reachability}

The Transformer has model dimension $d = 128$, $8$ attention heads, $4$ decoder layers, and $d_{\text{ff}} = 512$. Training labels are tightened by strip-based constraints $|y_d - c_d| \le r_d$ derived from historical trajectories, without requiring knowledge of $C$. Training uses $N_s = 10{,}000$ samples with initial-condition augmentation, learning rate $3{\times}10^{-4}$, patience $150$, and $1500$ epochs. The context spans $n_o = 5$ zonotopes with $K_g = 8$ generators, yielding $45$ input tokens; the target is the next zonotope with $9$ tokens. At inference, context zonotopes are fitted from $M = 500$ trajectories via Lemma~\ref{lm:pca-fit}, and the Transformer predicts autoregressively for $5$ steps. Conformal calibration uses $n_{\mathrm{cal}} = 200$ trajectories at $\delta = 0.05$; scores~\eqref{eq:score} are computed against realized outputs, and the quantile~\eqref{eq:qhat} inflates each set.

Fig.~\ref{fig:tf-results} shows results for all three configurations. The raw Transformer predictions extend $5$ steps beyond the observation horizon, and the conformally calibrated predictions TF$+\hat{q}$ provide inflated sets with coverage guarantees. Of $1000$ test trajectories, those with initial $n_o$ outputs inside the context zonotopes are retained, and all retained trajectories achieve $100\%$ containment at every step. Table~\ref{tab:baseline} reports the mean interval hull width and empirical coverage. The SysID baseline for $C_b$ and $C_c$ yields spectrally unstable models ($\dagger$), so its apparent tightness carries no safety certificate. The conformally calibrated Transformer achieves $100\%$ coverage across all configurations, with mean interval width $1.5$--$1.7\times$ the Model$+C$ baseline and an order-of-magnitude reduction over DD.

% ── Figure: DD partial observability ──

% ── Table: Baseline comparison ──
\begin{table}[!t]
    \centering
    \caption{Mean interval hull width and empirical coverage at prediction steps $n_o{+}1$ to $n_o{+}3$. MC: Monte Carlo hull; Model$+C$: model-based with known $C$; TF$+\hat{q}$: conformal Transformer; DD: Algorithm~\ref{alg:dd-reach}; SysID: Ho--Kalman identification. $\dagger$: spectrally unstable identified model.}
    \label{tab:baseline}
    \renewcommand{\arraystretch}{1.1}
    \begin{tabular}{c c c c c c c c}
        \hline
         & Step & MC & Model$+C$ & TF$+\hat{q}$ & DD & SysID & Cov. \\
        \hline
        \multirow{3}{*}{$C_a$}
         & $6$ & $0.254$ & $0.451$ & $0.680$ & $3.298$ & $0.371$ & $100\%$ \\
         & $7$ & $0.232$ & $0.443$ & $0.689$ & $4.558$ & $0.376$ & $100\%$ \\
         & $8$ & $0.219$ & $0.442$ & $0.729$ & $6.256$ & $0.441$ & $100\%$ \\
        \hline
        \multirow{3}{*}{$C_b$}
         & $6$ & $0.132$ & $0.281$ & $0.423$ & $2.048$ & $0.098^\dagger$ & $100\%$ \\
         & $7$ & $0.134$ & $0.283$ & $0.437$ & $2.790$ & $0.084^\dagger$ & $100\%$ \\
         & $8$ & $0.124$ & $0.288$ & $0.464$ & $3.783$ & $0.095^\dagger$ & $100\%$ \\
        \hline
        \multirow{3}{*}{$C_c$}
         & $6$ & $0.168$ & $0.296$ & $0.448$ & $2.096$ & $0.120^\dagger$ & $100\%$ \\
         & $7$ & $0.160$ & $0.304$ & $0.475$ & $2.866$ & $0.089^\dagger$ & $100\%$ \\
         & $8$ & $0.151$ & $0.314$ & $0.487$ & $3.904$ & $0.105^\dagger$ & $100\%$ \\
        \hline
    \end{tabular}
\end{table}

\section{Conclusion}\label{sec:conclusion}

We presented a data-driven framework for output reachability analysis of LTI systems with unknown dynamics. The Cayley--Hamilton theorem is applied to eliminate the latent state, yielding an autoregressive model whose parameter uncertainty is encoded as a matrix zonotope; formal output reachable sets are then propagated with set-containment guarantees, conditional on a valid outer bound on the aggregated residual. A decoder-only Transformer trained on certificate-assisted tighter labels replaces the conservative sequential propagation with direct set-valued predictions, and split conformal prediction restores distribution-free trajectory-point coverage at deployment. Experiments on a five-dimensional system with multiple unknown output matrices confirm that the conformally calibrated Transformer achieves tight output sets while maintaining the nominal coverage level. Future work will address nonlinear output models and adaptive calibration for non-stationary settings.

\bibliographystyle{IEEEtran}
\bibliography{ref}

@inproceedings{luetzow2023reachability,
  title={Reachability analysis of {ARMAX} models},
  author={L{\"u}tzow, Laura and Althoff, Matthias},
  booktitle={2023 62nd IEEE Conference on Decision and Control (CDC)},
  pages={4157--4164},
  year={2023},
  organization={IEEE}
}

@article{huang2026cddr,
  title={Conformalized data-driven reachability analysis with {PAC} guarantees},
  author={Huang, Yanliang and Zhang, Zhen and Xie, Peng and Zeng, Zhuoqi and Alanwar, Amr},
  journal={arXiv preprint arXiv:2603.12220},
  year={2026}
}

@article{van1994n4sid,
  title={N4SID: Subspace algorithms for the identification of combined deterministic-stochastic systems},
  author={Van Overschee, Peter and De Moor, Bart},
  journal={Automatica},
  volume={30},
  number={1},
  pages={75--93},
  year={1994},
  publisher={Elsevier}
}

@inproceedings{zhang2025data,
  title={Data-driven nonconvex reachability analysis using exact multiplication},
  author={Zhang, Zhen and Niazi, M Umar B and Chong, Michelle S and Johansson, Karl H and Alanwar, Amr},
  booktitle={2025 IEEE 64th Conference on Decision and Control (CDC)},
  pages={4882--4889},
  year={2025},
  organization={IEEE}
}

@article{scott2016constrained,
  title={Constrained zonotopes: A new tool for set-based estimation and fault detection},
  author={Scott, Joseph K and Raimondo, Davide M and Marseglia, Giuseppe Roberto and Braatz, Richard D},
  journal={Automatica},
  volume={69},
  pages={126--136},
  year={2016},
  publisher={Elsevier}
}

@inproceedings{kochdumper2020sparse,
  title={Sparse polynomial zonotopes: A novel set representation for reachability analysis},
  author={Kochdumper, Niklas and Althoff, Matthias},
  booktitle={IEEE Transactions on Automatic Control},
  volume={66},
  number={9},
  pages={4043--4058},
  year={2020},
  publisher={IEEE}
}

@inproceedings{althoff2008reachability,
  title={Reachability analysis of nonlinear systems with uncertain parameters using conservative linearization},
  author={Althoff, Matthias and Stursberg, Olaf and Buss, Martin},
  booktitle={2008 47th IEEE Conference on Decision and Control},
  pages={4042--4048},
  year={2008},
  organization={IEEE}
}

@article{althoff2021set,
  title={Set propagation techniques for reachability analysis},
  author={Althoff, Matthias and Frehse, Goran and Girard, Antoine},
  journal={Annual Review of Control, Robotics, and Autonomous Systems},
  volume={4},
  number={1},
  pages={369--395},
  year={2021},
  publisher={Annual Reviews}
}

@book{rierson2017developing,
  title={Developing safety-critical software: a practical guide for aviation software and DO-178C compliance},
  author={Rierson, Leanna},
  year={2017},
  publisher={CRC Press}
}

@inproceedings{girard2005reachability,
  title={Reachability of uncertain linear systems using zonotopes},
  author={Girard, Antoine},
  booktitle={International workshop on hybrid systems: Computation and control},
  pages={291--305},
  year={2005},
  organization={Springer}
}

@inproceedings{girard2006efficient,
  title={Efficient computation of reachable sets of linear time-invariant systems with inputs},
  author={Girard, Antoine and Le Guernic, Colas and Maler, Oded},
  booktitle={International workshop on hybrid systems: Computation and control},
  pages={257--271},
  year={2006},
  organization={Springer}
}

@inproceedings{alanwar2021data,
  title={Data-driven reachability analysis using matrix zonotopes},
  author={Alanwar, Amr and Koch, Anne and Allg{\"o}wer, Frank and Johansson, Karl Henrik},
  booktitle={Learning for Dynamics and Control},
  pages={163--175},
  year={2021},
  organization={PMLR}
}

@phdthesis{althoff2010reachability,
  title={Reachability analysis and its application to the safety assessment of autonomous cars},
  author={Althoff, Matthias},
  year={2010},
  school={Technische Universit{\"a}t M{\"u}nchen}
}

@article{alanwar2023data,
  title={Data-driven reachability analysis from noisy data},
  author={Alanwar, Amr and Koch, Anne and Allg{\"o}wer, Frank and Johansson, Karl Henrik},
  journal={IEEE Transactions on Automatic Control},
  volume={68},
  number={5},
  pages={3054--3069},
  year={2023},
  publisher={IEEE}
}

@inproceedings{thorpe2021sreachtools,
  title={SReachTools kernel module: Data-driven stochastic reachability using Hilbert space embeddings of distributions},
  author={Thorpe, Adam J and Ortiz, Kendric R and Oishi, Meeko MK},
  booktitle={2021 60th IEEE Conference on Decision and Control (CDC)},
  pages={5073--5079},
  year={2021},
  organization={IEEE}
}

@article{devonport2023data,
  title={Data-driven reachability and support estimation with Christoffel functions},
  author={Devonport, Alex and Yang, Forest and El Ghaoui, Laurent and Arcak, Murat},
  journal={IEEE Transactions on Automatic Control},
  volume={68},
  number={9},
  pages={5216--5229},
  year={2023},
  publisher={IEEE}
}

@inproceedings{griffioen2023data,
  title={Data-driven reachability analysis for Gaussian process state space models},
  author={Griffioen, Paul and Arcak, Murat},
  booktitle={2023 62nd IEEE Conference on Decision and Control (CDC)},
  pages={4100--4105},
  year={2023},
  organization={IEEE}
}

@inproceedings{sivaramakrishnan2024stochastic,
  title={Stochastic reachability of uncontrolled systems via probability measures: Approximation via deep neural networks},
  author={Sivaramakrishnan, Karthik and Sivaramakrishnan, Vignesh and Devonport, Rosalyn A and Oishi, Meeko MK},
  booktitle={2024 IEEE 63rd Conference on Decision and Control (CDC)},
  pages={7534--7541},
  year={2024},
  organization={IEEE}
}

@article{hashemi2024statistical,
  title={Statistical reachability analysis of stochastic cyber-physical systems under distribution shift},
  author={Hashemi, Navid and Lindemann, Lars and Deshmukh, Jyotirmoy V},
  journal={IEEE Transactions on Computer-Aided Design of Integrated Circuits and Systems},
  volume={43},
  number={11},
  pages={4250--4261},
  year={2024},
  publisher={IEEE}
}

@inproceedings{alanwar2022data,
  title={Data-driven set-based estimation using matrix zonotopes with set containment guarantees},
  author={Alanwar, Amr and Berndt, Alexander and Johansson, Karl Henrik and Sandberg, Henrik},
  booktitle={2022 European Control Conference (ECC)},
  pages={875--881},
  year={2022},
  organization={IEEE}
}

@book{horn2012matrix,
  title={Matrix analysis},
  author={Horn, Roger A and Johnson, Charles R},
  year={2012},
  publisher={Cambridge university press}
}

@article{kuhn1998rigorously,
  title={Rigorously computed orbits of dynamical systems without the wrapping effect},
  author={K{\"u}hn, Wolfgang},
  journal={Computing},
  volume={61},
  number={1},
  pages={47--67},
  year={1998},
  publisher={Springer}
}

@article{vaswani2017attention,
  title={Attention is all you need},
  author={Vaswani, Ashish and Shazeer, Noam and Parmar, Niki and Uszkoreit, Jakob and Jones, Llion and Gomez, Aidan N and Kaiser, {\L}ukasz and Polosukhin, Illia},
  journal={Advances in neural information processing systems},
  volume={30},
  year={2017}
}

@book{vovk2005algorithmic,
  title={Algorithmic learning in a random world},
  author={Vovk, Vladimir and Gammerman, Alexander and Shafer, Glenn},
  year={2005},
  publisher={Springer}
}

@article{shafer2008tutorial,
  title={A tutorial on conformal prediction.},
  author={Shafer, Glenn and Vovk, Vladimir},
  journal={Journal of machine learning research},
  volume={9},
  number={3},
  year={2008}
}

@article{milanese2004set,
  title={Set membership identification of nonlinear systems},
  author={Milanese, Mario and Novara, Carlo},
  journal={Automatica},
  volume={40},
  number={6},
  pages={957--975},
  year={2004},
  publisher={Elsevier}
}

@article{combastel2015zonotopes,
  title={Zonotopes and Kalman observers: Gain optimality under distinct uncertainty paradigms and robust convergence},
  author={Combastel, Christophe},
  journal={Automatica},
  volume={55},
  pages={265--273},
  year={2015},
  publisher={Elsevier}
}

@book{hespanha2018linear,
  title={Linear Systems Theory},
  author={Hespanha, Jo{\~a}o P.},
  year={2018},
  edition={2},
  publisher={Princeton University Press}
}

\end{document}